\documentclass[journal]{IEEEtran}

\usepackage[utf8]{inputenc} 
\usepackage[T1]{fontenc}
\usepackage{graphicx}
\usepackage{xcolor}
\usepackage{url}
\usepackage{amssymb}
\usepackage{cite}
\usepackage{bm}
\usepackage{arydshln}
\usepackage[caption=false,font=footnotesize]{subfig}
\usepackage{enumitem}
\usepackage[cmex10]{amsmath} 

\newtheorem{definition}{Definition}
\newtheorem{corollary}{Corollary}
\newtheorem{proposition}{Proposition}
\newtheorem{theorem}{Theorem}
\newtheorem{remark}{Remark}
\newtheorem{lemma}{Lemma}
\newcommand{\FPDA}{F_{\rm PDA}}
\begin{document}
\title{Coded Computing for Dynamic System via a Cartesian Product}




\author{
    Chenglin Li\IEEEauthorrefmark{1}, 
    Minquan Cheng\IEEEauthorrefmark{2}, 
    Youlong Wu\IEEEauthorrefmark{1}
    \thanks{\IEEEauthorrefmark{1}School of Information Science and Technology, ShanghaiTech University, Shanghai, China. Email: \{lichl2023,wuyl1\}@shanghaitech.edu.cn}
    \thanks{\IEEEauthorrefmark{2}Guangxi Key Laboratory of Multi-source Information Mining \& Security, Guangxi Normal University, Guilin, China. Email: chengqinshi@hotmail.com}
}

\maketitle

\begin{abstract}
This paper studies coded distributed computing (CDC) in a dynamic system in
which workers may depart and new clusters may join. The caches of the surviving
workers and their existing Reduce assignments stay untouched, while the storage
brought by arriving clusters is put to use. In contrast to elastic computing,
which targets linear functions, and to dynamic coded caching, which requires
placement across all users, the model considered here accommodates general
MapReduce tasks under a placement that is partly fixed and partly new. The
proposed scheme builds cluster-wise placement delivery arrays (PDAs) and
couples them through a Cartesian product. A delivery-aware rule reassigns the
abandoned Reduce functions, and a generalized communication PDA allows even
workers that hold no Reduce function to serve as coded transmitters. For
arbitrary feasible disconnections, the exact load is derived. A file-wise converse for
instantly decodable XOR multicasts shows that, in the absence of disconnections,
the scheme lies within a factor
of two of the best scheme in this multicast class when new clusters arrive, and
within a factor of four otherwise, even when the benchmark optimizes over all
feasible uncoded placements.
\end{abstract}
\begin{IEEEkeywords}
  Cartesian product, coded computing, distributed computing, MapReduce, placement delivery array (PDA).
\end{IEEEkeywords}
\section{Introduction}
Modern data-intensive applications generate data at a scale that no single
machine can handle alone. Distributed computing rises to this challenge by
pooling geographically scattered computing and storage resources into a unified
computing fabric, so that workloads far exceeding the capacity of any individual
server can be partitioned and executed in parallel. Frameworks such as
MapReduce \cite{dean2008mapreduce} and Spark \cite{zaharia2010spark} have turned
this idea into a practical and widely adopted paradigm for large-scale data
processing.

Although MapReduce executes tasks concurrently, communication during the
Shuffle phase can become a bottleneck as the workload grows. Coded distributed
computing (CDC) \cite{li2017fundamental}, inspired by coded caching
\cite{6763007},
exploits the overlap among the files stored at different workers to create
coding opportunities, and thereby alleviates this bottleneck substantially. For
central server-aided MapReduce systems, in which a central server performs the
Map operations and assists the workers in executing MapReduce tasks, several
CDC schemes have been developed
\cite{7935426, chen2021coded, chen2023optimality}. They have since been extended
to heterogeneous scenarios
\cite{Woolsey'21, 9448271, xu2021new, reisizadeh2019coded, wang2022coded, Song'22, wu2024coded}.
For instance, \cite{Woolsey'21, 9448271, xu2021new} study the file allocation
problem when worker storage and computing resources are heterogeneous, whereas
\cite{reisizadeh2019coded} focuses on the matrix multiplication problem in
heterogeneous settings. Moreover, \cite{Song'22} examines output-function
allocation for predetermined file placements, but the heavy computation load it
incurs makes it difficult to extend to our setting, while \cite{wang2022coded}
considers predetermined file and output-function assignments without worker
arrivals. A static worker set underlies all of these schemes, so that worker
arrivals and departures remain outside their scope. In contrast, this paper
considers a dynamic MapReduce system in which the worker population changes
while the computation is running: a worker may leave after the Map phase, or a
new cluster may join with unused storage before the Shuffle phase. Any worker
may leave, whether it belongs to an initial cluster or has just joined. The two
events affect the system differently: a departing worker takes its cached
intermediate values with it and, if it belongs to an initial cluster, abandons
its Reduce function. A newly joined worker brings storage and communication
resources but initially holds no function.
Departures and arrivals thus have to be handled jointly, while the caches of
the workers that stay, along with the Reduce functions already assigned to
them, remain unchanged. The placement of the surviving workers must therefore
be kept intact, while the heterogeneous caches of the newly joined clusters are
exploited and the abandoned Reduce functions are reallocated.

These dynamics couple three design tasks that are otherwise treated separately,
and this coupling is the core difficulty of the problem. Since the surviving
caches are frozen, the placement is only partly designable: the only free
storage is the one brought by the arriving clusters, which are in general
heterogeneous and unrelated to the surviving ones, so that the coding
opportunities have to be created across the two kinds of clusters. Since a
departing initial worker also gives up its Reduce function, the abandoned
functions must be reassigned, and the assignment determines which intermediate
values each worker still requires and hence the messages it has to receive, so
that assignment and delivery can no longer be optimized one after the other. Since
the intermediate values are exchanged in a device-to-device (D2D) shuffle
without a server that holds all the files, the coded messages cannot be formed
by splitting a server transmission, and a worker that holds no Reduce function,
as newly joined workers typically do, must still be able to serve as a coded
transmitter. Placement, reassignment, and delivery must thus be designed
jointly under a partly frozen placement, which sets the problem apart from the
schemes reviewed in Section~\ref{sec:related}.

Our design builds on the placement delivery array (PDA)~\cite{7973185}, which
represents placement and multicast delivery in a single combinatorial object,
and proceeds in three steps: we first construct a structured Cartesian-product
placement, then evaluate the delivery cost induced by each feasible
reassignment, and finally instantiate a communication PDA that realizes the
selected coded transmissions. The main contributions are as follows:

\begin{itemize}
    \item We propose dynamic CDC for worker arrivals and departures in heterogeneous clusters. A Cartesian product of PDAs couples the unchanged placement of surviving workers with the caches of newly joined workers, creating cross-cluster coding opportunities without rewriting surviving caches. The placement tolerates any $\sum_{c\in[C]}t_c-1$ worker disconnections, and this worst-case guarantee is tight.
    
    \item We develop a delivery-aware Reduce-function reassignment rule. For each candidate cluster, abandoned functions are assigned to a worker with the smallest current assignment, and the cluster is selected using a simulated communication cost that is computed efficiently. Workers without Reduce functions can still act as coded transmitters, which is essential when storage and computing resources are heterogeneous.
    
    \item We derive the exact communication load and establish a file-wise converse for instantly decodable XOR multicasts. The converse applies to arbitrary feasible uncoded placements, including unbalanced ones. Without disconnections, the proposed scheme is within a factor of two of the best scheme in this class when a new cluster joins and within a factor of four otherwise, even if that scheme may choose its placement.
\end{itemize}

\subsection{Related Work}
\label{sec:related}

\textbf{Elastic computing.}
Elastic computing addresses distributed computation over a worker set that
changes with time, and its canonical instance is coded elastic computing
(CEC)~\cite{CEC} for matrix multiplication in cloud systems running on
preemptible virtual machines. The input matrices are encoded with an MDS code,
each machine stores one coded piece of each matrix, and a master recovers the
product from any sufficiently large subset of the available machines, so that
preempted machines can be replaced by newly rented ones without restarting the
computation, each new machine receiving a workload that matches its speed.
Subsequent work has extended this framework to machines with heterogeneous
storage and computation speed \cite{Woolsey'CEC, Woolsey'GLOBECOM}, to straggler
tolerance via Lagrange codes \cite{Zhong'23}, to uncoded storage with coded
transmission \cite{Zhong'24}, to the decentralized setting \cite{Huang'24}, and
to joint storage and download resilience \cite{11195520}.

These schemes, however, rest on assumptions that dynamic MapReduce does not
satisfy. The first is linearity. In CEC the computation is a matrix product, so
the piece held by a machine is a linear function of the input and the partial
results can be combined linearly, whereas the Reduce function of a general
MapReduce task is an arbitrary nonlinear function of the intermediate values of
all input files, and those values cannot be replaced by a linear combination of
encoded shares. The second is that elasticity is realized by re-encoding: the
storage contents are defined with respect to the machine set that is currently
available, so that whenever machines join or leave, the data are encoded anew
and redistributed over the new set. This is precisely what our model forbids,
because the caches of the surviving workers must remain untouched and only the
storage brought by the arriving clusters may be filled; the elasticity must
therefore come from the placement and the coding of the transmitted messages
rather than from re-encoding. Moreover, elastic computing assumes a
master-worker model in which every machine returns a partial result to the
master, so it involves neither a Reduce-function assignment problem nor a
device-to-device shuffle in which the coding opportunities arise from
overlapping caches. For these reasons, the elastic computing framework does not
extend to the dynamic MapReduce setting studied here.

\textbf{Coded caching and placement delivery arrays.}
Our scheme inherits its placement-delivery principle from coded
caching \cite{6763007, 6807823}. The placement delivery array (PDA) represents
file placement and multicast delivery by a single combinatorial object
\cite{7973185}, and PDA constructions now cover a broad range of coded-caching
schemes \cite{10086693,8437603,8613522}. In particular, the Cartesian-product
PDA \cite{wang2023placement} accommodates heterogeneous cache ratios while
keeping the subpacketization level relatively small, and it is this construction
that we adapt. These constructions, however, are developed for a static user
set: the placement is chosen from scratch as a function of the number of users
and their cache sizes, and the delivery array is designed so that a server
holding all the files can form every multicast. Neither premise holds in dynamic
CDC, where the surviving caches cannot be rewritten, the arriving clusters bring
caches of their own, and the coded messages have to be produced by the workers
themselves. A more fundamental difference is that coded caching has no notion of
function assignment: a user is served rather than tasked, so no part of the
system becomes invalid when a user leaves. In CDC, by contrast, the Reduce
functions are assigned explicitly, and a departure forces the abandoned
functions to be reassigned, which makes the assignment itself a design variable.

\textbf{Dynamic coded caching.}
Coded caching has also been studied with changing user populations
\cite{Zhang'20, Wu'24, 9815628}. Reference~\cite{Zhang'20} allows arrivals but
not departures; \cite{Wu'24} permits departures only with a subset of stationary
workers and imposes proportional constraints on cluster sizes and cache
capacities; and \cite{9815628} is developed for multi-antenna systems. None of
these models covers a system in which arbitrary workers may leave, the caches
and the Reduce functions of the surviving workers are frozen, and the newly
arriving clusters bring heterogeneous caches but no function. The problem
addressed in this paper therefore lies outside the scope of these models as
well.

\subsection{Notation}
\label{sec:notation}
Let $\mathbb N^+$ and $\mathbb N_0$ denote the positive and nonnegative
integers, respectively. For $n\in\mathbb N_0$, write
$[n]=\{1,\ldots,n\}$, with $[0]=\emptyset$. For integers $a,b$, let
$[a:b]=\{m\in\mathbb Z:a\leq m\leq b\}$; this set is empty if $a>b$.
We use $\mathbf1\{\cdot\}$ for the indicator of a condition. Empty sums
and products equal $0$ and $1$, respectively.

\subsection{Paper Organization}
\label{sec:organization}
The remainder of this paper is organized as follows. To keep the paper
self-contained, we first survey the closely related work on elastic computing,
coded caching, and dynamic coded caching, and then fix the notation used
throughout. With this background in hand, Section~\ref{sec:model} turns to the
problem itself: it formalizes the dynamic MapReduce model---the placement,
Reduce-function reassignment, Shuffle, and Reduce phases---and defines the
communication load that gauges the cost of a scheme. We next assemble the
combinatorial ingredients our construction rests on, namely the placement
delivery array and the Cartesian product, before a small illustrative example
distills the central idea in a concrete and easily followed setting. Building on
this intuition, Section~\ref{sec:scheme} develops the general scheme step by
step: the Cartesian-product file placement, the construction of the
communication PDA, the delivery-aware reassignment of Reduce functions, and the
coded delivery. The analysis then takes over in Section~\ref{sec:results}, which
derives the communication load and confronts it with both the general CDC
converse and the file-wise multicast converse to establish constant-factor
guarantees in the absence of disconnections. Numerical experiments against a
decentralized CDC baseline corroborate the theoretical gains in
Section~\ref{sec:numerical}. We close in Section~\ref{sec:conclusion} with a
summary of the findings and the open problems they leave behind.

\section{Problem Formulation}
\label{sec:model}
\begin{figure}[t]
	\centering
	\includegraphics[width=\columnwidth]{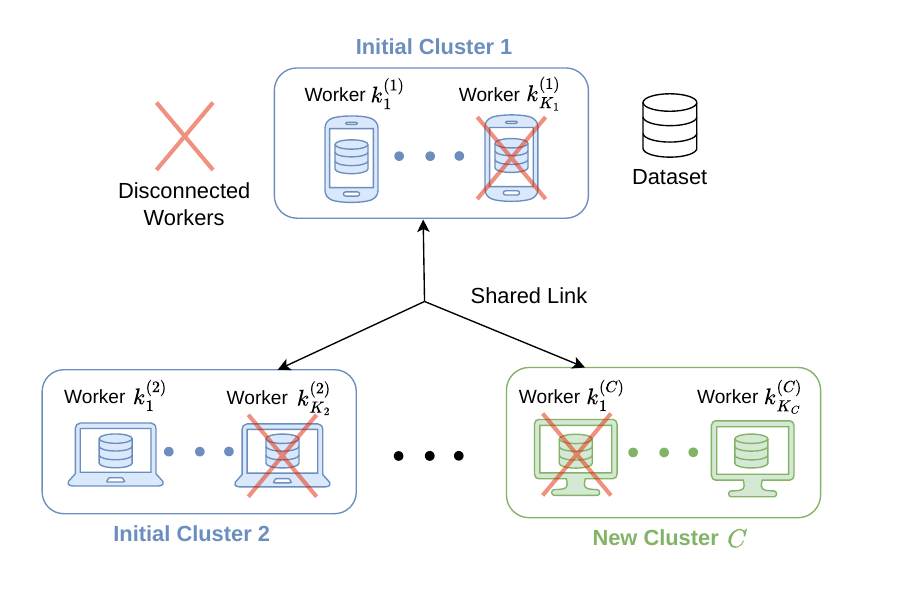}
	\caption{System model with initial and newly arriving clusters.}
	\label{fig:system_model}
\end{figure}
Consider the dynamic MapReduce framework depicted in Fig.~\ref{fig:system_model}.
The system holds $N\in\mathbb N^+$ input files, denoted by $w_1,\ldots,w_{N} \in \mathbb{F}_{2^{F}}$, for some $F\in \mathbb{N}^+$.
The computation specifies $Q\in\mathbb N^+$ output functions, denoted by $\phi_1,\ldots,\phi_{Q}$, where each $\phi_q:(\mathbb{F}_{2^{F}})^{N}\rightarrow \mathbb{F}_{2^{B}}$, $q\in[Q]$, maps the input files $\{w_1,\ldots,w_{N}\}$ to an output result of length $B\in \mathbb{N}^+$.
The system starts with $C_{\textnormal{ini}}\in\mathbb N^+$ initial clusters, indexed by
$\mathcal C_{\rm ini}=[C_{\textnormal{ini}}]$. Before the Shuffle phase,
$C_{\textnormal{new}}\in\mathbb N_0$ new clusters join and are indexed by
$\mathcal C_{\rm new}=[C_{\textnormal{ini}}+1:C]$, where
$C=C_{\textnormal{ini}}+C_{\textnormal{new}}$ denotes the total number of clusters, and the set $\mathcal C_{\rm new}$ is empty when no cluster joins. The participating cluster set is therefore
$\mathcal C=\mathcal C_{\rm ini}\cup\mathcal C_{\rm new}=[C]$. We regard
$\mathcal C$ as a subset of a pool $[C_{\textnormal{max}}]$ of possible
clusters, where $C_{\textnormal{max}}\in\mathbb N^+$; hence
$C\leq C_{\textnormal{max}}$ (equivalently,
$C_{\textnormal{new}}\leq C_{\textnormal{max}}-C_{\textnormal{ini}}$).
Cluster $c\in\mathcal C$ has $K_c\in\mathbb N^+$ workers,
each storing $M_c\in[N]$ files, or $M_cF$ bits. Cluster sizes and cache capacities
may differ. The per-cluster replication level is
$t_c=K_cM_c/N\in\{1,\ldots,K_c-1\}$.
All connected workers communicate over a lossless shared link.
The total number of workers is $K=\sum_{c=1}^{C}K_c$, and they are globally
indexed by $[K]$. Specifically, cluster $c\in[C]$ comprises the workers indexed
by
\begin{IEEEeqnarray*}{rCl}
\mathcal K_c
&=&\left[\sum_{n=1}^{c-1}K_n+1:\sum_{n=1}^{c-1}K_n+K_c\right].
\end{IEEEeqnarray*}
For $c\in[C]$ and $i\in[K_c]$, we denote the $i$-th worker in cluster
$c$ and its global index by
\begin{IEEEeqnarray*}{rCl}
k_i^{(c)}&=&\sum_{n=1}^{c-1}K_n+i,\qquad i\in[K_c].
\end{IEEEeqnarray*}
Thus, the workers of the initial clusters are indexed by $[K_0]$, where
$K_0=\sum_{c\in\mathcal C_{\rm ini}}K_c$. For $c\in[C]$ and
$i\in[K_c]$, let $\mathcal M_i^{(c)}\subseteq[N]$ denote the file
indices cached by worker $k_i^{(c)}$. Each worker in cluster $c$ stores
exactly $M_c$ files, i.e., $|\mathcal M_i^{(c)}|=M_c$ for all $c\in[C]$ and
$i\in[K_c]$.
For each worker $k\in[K]$, let $\mathcal Q_k\subseteq[Q]$ denote the indices
of its assigned Reduce functions. Initially, $\mathcal Q_k=\{k\}$ for
$k\in[K_0]$ and $\mathcal Q_k=\emptyset$ for $k\in[K]\setminus[K_0]$; hence,
$Q=K_0$. A worker with no Reduce function can still transmit coded messages.

Assume that the computation of each output function $\phi_q$, $q \in[Q]$, can be decomposed as
\begin{IEEEeqnarray*}{rCl}
	\phi_q(w_1,\ldots,w_N) = h_q(g_{q,1}(w_1),\ldots,g_{q,N}(w_N)).
\end{IEEEeqnarray*}
For $q\in[Q]$ and $n\in[N]$, the map function is
$g_{q,n}:\mathbb F_{2^F}\to\mathbb F_{2^T}$, where $T\in\mathbb N^+$.
The decomposition comprises
\begin{itemize}
    \item Mapping file $w_n$, $n\in[N]$, produces the $Q$ intermediate values
    $v_{q,n}=g_{q,n}(w_n)\in\mathbb F_{2^T}$, one for each output function
    $q\in[Q]$.
    \item For each $q\in[Q]$, the Reduce function $h_q:(\mathbb F_{2^T})^N\to\mathbb F_{2^B}$
    combines these IVs to compute
    $\phi_q(w_1,\ldots,w_N)=h_q(v_{q,1},\ldots,v_{q,N})$.
\end{itemize}
As in the standard CDC model, the IVs are treated as independent, uniformly
distributed $T$-bit symbols for the communication-load analysis.
The computation unfolds in five stages: Map, dynamic cluster update, Reduce-function reassignment, Shuffle, and Reduce. The dynamic cluster update overlaps the Map phase, and both the dynamic cluster update and the Reduce-function reassignment finish before the Shuffle phase begins.

\textbf{Map Phase:} The input files are distributed among the clusters according to the placement rule of Section~\ref{sec:placement}, and every worker maps its local files into intermediate values (IVs). Specifically, worker $k^{(c)}_i$, $c\in[C]$, $i\in[K_c]$, maps the local cache files $\mathcal M_i^{(c)}$ into $\{v_{q,n}:q\in [Q],n\in\mathcal M_i^{(c)}\}$.

\textbf{Dynamic Cluster Update:} While the workers perform
their Map operations, some workers may disconnect and new clusters may join.
This stage overlaps the Map phase and ends before the Shuffle phase, so the
participating workers are fixed before Shuffle begins. Let
$\mathcal D\subseteq[K]$ be the disconnected workers, partitioned into
disconnected initial and newly joined workers as
\begin{IEEEeqnarray*}{rCl}
\mathcal D_{\rm ini}&=&\mathcal D\cap[K_0],\\
\mathcal D_{\rm new}&=&\mathcal D\cap([K]\setminus[K_0]).
\end{IEEEeqnarray*}
These sets are disjoint and their union is $\mathcal D$.
Define the worker connection vector $\boldsymbol E=(E_1,\ldots,E_K)\in\{0,1\}^K$ by
\begin{IEEEeqnarray*}{rCl}
E_k&=&\begin{cases}1,&k\notin\mathcal{D},\\0,&k\in\mathcal{D},\end{cases}\qquad k\in[K].
\end{IEEEeqnarray*}
The workers participating in the Shuffle phase are those in
\begin{IEEEeqnarray*}{rCl}
\bar{\mathcal D}&=&\{k\in[K]:E_k=1\}=[K]\setminus\mathcal{D}.
\end{IEEEeqnarray*}
A disconnected worker no longer contributes its cached IVs or acts as a
transmitter. A worker in $\mathcal D_{\rm ini}$ also abandons its Reduce
function, whereas one in $\mathcal D_{\rm new}$ held none initially. For
$c\in\mathcal C_{\rm ini}$ and $i\in[K_c]$ with
$k_i^{(c)}\in\bar{\mathcal D}$, the cache $\mathcal M_i^{(c)}$ remains
fixed. For $c\in\mathcal C_{\rm new}$ and $i\in[K_c]$ with
$k_i^{(c)}\in\bar{\mathcal D}$, the cache is determined by
Section~\ref{sec:placement}, subject to $|\mathcal M_i^{(c)}|=M_c$.

\textbf{Reduce-Function Reassignment:} Exactly
$|\mathcal D_{\rm ini}|$ functions are abandoned. Since initial worker
$k$ originally holds $\phi_k$, their indices are precisely
$\mathcal D_{\rm ini}$. Each abandoned function is reassigned to exactly one
worker in $\bar{\mathcal D}$. After reassignment,
$\mathcal Q_k=\emptyset$ for every $k\in\mathcal D$, every surviving initial
worker retains its original function, i.e., $k\in\mathcal Q_k$ for
$k\in[K_0]\setminus\mathcal D_{\rm ini}$, and the assignment sets of distinct connected
workers are disjoint and have union $[Q]$. Write
$\gamma_k=|\mathcal Q_k|\in[0:Q]$ for $k\in[K]$ and
$\boldsymbol\gamma=(\gamma_1,\ldots,\gamma_K)$ for the function counts.
Thus, $\sum_{k\in[K]}\gamma_k=Q$. Section~\ref{sec:reassignment} gives a
delivery-aware greedy rule for constructing the assignment sets. Reassignment
finishes before the Shuffle phase and changes neither surviving caches nor
retained functions.

\textbf{Shuffle Phase:} Worker $k^{(c)}_i$, $c\in[C]$, $i\in[K_c]$, holds the function set $\mathcal Q_{k_i^{(c)}}$, and it therefore needs the intermediate values $\{v_{q,n}:n\in[N],q\in\mathcal Q_{k_i^{(c)}}\}$ to compute these functions and to obtain the outputs $\phi_q(w_1,\ldots,w_N)=h_q(v_{q,1},\ldots,v_{q,N})$, $q\in\mathcal Q_{k_i^{(c)}}$.

For $c\in[C]$ and $i\in[K_c]$, let $\ell_i^{(c)}\in\mathbb N_0$ be
the number of bits sent by worker $k_i^{(c)}$ during the Shuffle phase.

        Every connected worker $k^{(c)}_i$ encodes its local IVs into a message $\mathcal{X}^{(c)}_i\in \{0,1\}^{\ell^{(c)}_i}$ by means of an encoding function $\psi^{(c)}_i:(\mathbb{F}_{2^T})^{Q|\mathcal M_i^{(c)}|}\rightarrow \{0,1\}^{\ell^{(c)}_i}$. A disconnected worker sends no message and therefore has zero message length.
          
    \textbf{Reduce Phase:} Worker $k^{(c)}_i$, $c\in[C]$, $i\in[K_c]$, decodes the coded messages received in the Shuffle phase using its local IVs and the decoding function $\eta^{(c)}_{i,q}$, thereby obtaining the inputs of each Reduce function $q\in\mathcal Q_{k_i^{(c)}}$, i.e.,
        \begin{IEEEeqnarray*}{rCl}
        (v_{q,1},\ldots,v_{q,N})&=&\eta^{(c)}_{i,q}\Bigl(\bigl\{\mathcal{X}^{(c')}_{i'}:c'\in[C],i'\in[K_{c'}]\bigr\},\\
        &&\bigl\{v_{q',n}:q'\in[Q],n\in\mathcal M_i^{(c)}\bigr\}\Bigr),
        \end{IEEEeqnarray*}
        where $\eta^{(c)}_{i,q}:\prod_{c'=1}^{C}\prod_{i'=1}^{K_{c'}}\{0,1\}^{\ell^{(c')}_{i'}}\times(\mathbb{F}_{2^T})^{Q|\mathcal M_i^{(c)}|}\rightarrow (\mathbb{F}_{2^T})^{N}$. It then computes each assigned output as $\phi_q(w_1,\ldots,w_N)=h_q(v_{q,1},\ldots,v_{q,N})$, $q\in\mathcal Q_{k_i^{(c)}}$.

\begin{definition}[Communication Load]
\label{def:load}
    For a system configuration $(\mathcal C_{\rm ini},\mathcal C_{\rm new},\mathcal{D})$ of initial clusters, newly joined clusters and the departure workers, and a placement $\mathcal M=\{\mathcal M_i^{(c)}:c\in[C],\ i\in[K_c]\}$, the communication load is the total number of bits transmitted during the Shuffle phase, normalized by $NQT$, i.e.,
    \begin{IEEEeqnarray*}{rCl}
L\bigl(\mathcal C_{\rm ini},\mathcal C_{\rm new},\mathcal{D},\mathcal{M}\bigr)&\triangleq&\frac{\sum_{c\in\mathcal C_{\rm ini}\cup\mathcal C_{\rm new}}\sum_{i\in[K_c]}\ell_i^{(c)}}{NQT},
    \end{IEEEeqnarray*}
    where $\ell_i^{(c)}$ is the length of the message sent by worker $k^{(c)}_i$ under this configuration and placement and under the scheme at hand; in particular, $\ell_i^{(c)}=0$ whenever $k_i^{(c)}\in\mathcal{D}$, so that a disconnected worker contributes nothing. For the scheme proposed in this paper the configuration and the placement determine the messages completely, because the Reduce function assignment and the coded transmissions are produced from them by the rules of Sections~\ref{sec:reassignment} and~\ref{sec:delivery}; the load attained in this way is written $L_{\rm dyn}(\mathcal C_{\rm ini},\mathcal C_{\rm new},\mathcal{D},\mathcal{M})$ and evaluated in Theorem~\ref{thm:load}.
\end{definition}

Our objective is to jointly design the caches
$\{\mathcal M_i^{(c)}:c\in\mathcal C_{\rm new},\ i\in[K_c]\}$ of the newly
joined clusters, the Reduce-function assignment sets
$\{\mathcal Q_k:k\in[K]\}$, and the Shuffle
communication, so as to achieve a lower communication load
$L(\mathcal C_{\rm ini},\mathcal C_{\rm new},\mathcal{D},\mathcal{M})$ for the
configuration $(\mathcal C_{\rm ini},\mathcal C_{\rm new},\mathcal{D})$
prescribed by the instance, where $\mathcal{M}$ consists of the frozen caches of
the surviving initial workers and the designed caches of the newly joined
clusters. The design is subject to the following constraints:
\begin{itemize}
    \item the caches of the surviving initial workers are frozen, i.e., $\mathcal M_i^{(c)}$ is fixed for every $c\in\mathcal C_{\rm ini}$ and $i\in[K_c]$ such that $k_i^{(c)}\in\bar{\mathcal D}$;
    \item every worker stores exactly $M_c$ files, i.e., $|\mathcal M_i^{(c)}|=M_c$ for every $c\in[C]$ and $i\in[K_c]$;
    \item every file remains available, i.e., $\bigcup_{c\in[C]}\bigcup_{\substack{i\in[K_c]:\\ k_i^{(c)}\in\bar{\mathcal D}}}\mathcal M_i^{(c)}=[N]$;
    \item each worker recovers the intermediate values of every Reduce function it holds from its cached IVs and the messages it receives, i.e., for every worker $k^{(c)}_i$ and every $q\in\mathcal Q_{k_i^{(c)}}$ the decoding function $\eta^{(c)}_{i,q}$ of the Reduce phase returns $(v_{q,1},\ldots,v_{q,N})$.
\end{itemize}

\section{Preliminaries}
\label{sec:preliminaries}
This section reviews the PDA and Cartesian-product definitions used below.
In these array definitions, $F$ denotes the number of rows.
\begin{definition}[Placement Delivery Array, abbreviated as PDA]
\label{def:pda}
	For positive integers $K,F,Z,S$, an $F\times K$ array $P$ whose entries are either $\star$ or integers in $[S]$ is a $(K,F,Z,S)$ placement delivery array (PDA) if it satisfies the following conditions:
	\begin{itemize}[label=]
		\item \textbf{C1:} The symbol $\star$ appears $Z$ times in each column.
		\item \textbf{C2:} Each integer $s\in [S]$ occurs at least once in the array.
		\item \textbf{C3:} For any two different coordinates $(f_1,k_1),(f_2,k_2)\in[F]\times[K]$, if $P(f_1,k_1)=P(f_2,k_2)=s\in[S]$, then
			\begin{itemize}
			\item $f_1 \neq f_2$ and $k_1 \neq k_2$;
			\item $P(f_1,k_2)=P(f_2,k_1)=\star$.
			\end{itemize}
	\end{itemize}
\end{definition}
\begin{definition}[All-Star Row and All-Star Column]
Given a $(K,F,Z,S)$ PDA $P$ and a symbol $s\in[S]$, a row $f\in[F]$
is an all-star row of $s$ if $P(f,m)=\star$ for every
$(j,m)\in[F]\times[K]$ satisfying $P(j,m)=s$. Similarly, a column
$k\in[K]$ is an all-star column of $s$ if $P(j,k)=\star$ for every
$(j,m)\in[F]\times[K]$ satisfying $P(j,m)=s$.
\end{definition}
\begin{definition}[Basic PDA]
\label{def:basic_pda}
	A $(K,F,Z,S)$ PDA is a basic PDA with repetition factor $\lambda\in\mathbb N^+$ if it also satisfies the following conditions:
	\begin{itemize}[label=]
		\item \textbf{C4:} The integer $\lambda$ divides $F$, and the star pattern
        repeats every $F/\lambda$ rows: for $f\in[F/\lambda]$, $k\in[K]$,
        and $i\in[0:\lambda-1]$, $P(f+iF/\lambda,k)=\star$ if and only if
        $P(f,k)=\star$.
		\item \textbf{C5:} There exists a map $\theta:[S]\to[F/\lambda]$ satisfying
			\begin{itemize}
				\item For any $s\in[S]$, row $\theta(s)$ is an all-star row of $s$;
				\item Each row $f\in[F/\lambda]$ is assigned the same number of
                symbols: $|\{s\in[S]:\theta(s)=f\}|=\lambda S/F$.
			\end{itemize}
	\end{itemize}
\end{definition}
For $g\in\mathbb N^+$, a PDA is called $g$-regular if each of its
non-star symbols occurs exactly $g$ times.
\begin{definition}[Cartesian Product]
	
	Given $m,n\in\mathbb N^+$ and two sets $A=\{a_1,a_2,\ldots,a_m\}$ and $B=\{b_1,b_2,\ldots,b_n\}$, their Cartesian product is $A\times B=\{(a_i,b_j):i\in[m],j\in[n]\}$.

For positive integers $F_1,K_1,F_2,K_2$, an $F_1\times K_1$ array $P_1$,
and an $F_2\times K_2$ array $P_2$,
the Cartesian product $P_3=P_1\times P_2$ has $F_1F_2$ rows and $K_1+K_2$
columns. Its rows are indexed by $[F_1]\times[F_2]$, with the second
coordinate varying fastest: $(f_1,f_2)$ occupies row $(f_1-1)F_2+f_2$.
For each such pair,
\begin{IEEEeqnarray*}{rCl}
P_3[(f_1,f_2),k] &=& P_1[f_1,k],\qquad 1\leq k\leq K_1,\\
&=& P_2[f_2,k-K_1],\qquad K_1<k\leq K_1+K_2.
\end{IEEEeqnarray*}
As a simple example,
	\begin{IEEEeqnarray*}{rCl}
	P_3=\left[\begin{array}{cc}
        1 & 2 \\
        3 & 4
    \end{array}\right] \times \left[\begin{array}{cc}
        5 & 6 \\
        7 & 8
    \end{array}\right] = \left[\begin{array}{cccc}
        1 & 2 & 5 & 6 \\
        1 & 2 & 7 & 8 \\
        3 & 4 & 5 & 6 \\
        3 & 4 & 7 & 8 
    \end{array}\right],
	\end{IEEEeqnarray*}
where $P_3[(1,1),1]=1$, $P_3[(1,1),3]=5$, and row~$2$ corresponds to the pair $(1,2)$.
\end{definition}
\section{Illustrative Example}
\label{sec:illustrative}
\begin{figure}[t]
\centering
\includegraphics[width=\columnwidth]{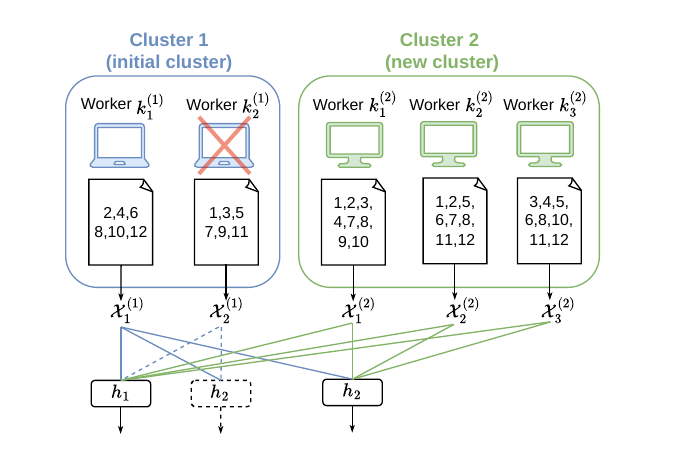}
\caption{The dynamic CDC system with $K_1=2$ initial workers. Worker $k^{(1)}_2$ disconnects, while $K_2=3$ new workers join the system.}
\label{fig:example}
\end{figure}
We now walk through an illustrative example that conveys the key idea of our
scheme; the corresponding dynamic CDC system appears in Fig.~\ref{fig:example}.
Consider a MapReduce computing system that starts with $K_1=2$ workers, each
of which aims to compute a distinct output function, and $N=12$ input files,
i.e., $Q=2$.
During operation, worker $k^{(1)}_2$ disconnects, and a new cluster of $K_2=3$
workers joins the system. The numbers of files cached by the workers in the two
clusters are $M_1=6$ and $M_2=8$, respectively.

First, construct the MN PDAs:
\begin{IEEEeqnarray*}{cc}
   P_{\textnormal{MN}}^{(1)} =\left[\begin{array}{cc}
        \star & 1 \\
        1 & \star
    \end{array}\right],\quad&
    P_{\textnormal{MN}}^{(2)} = \left[\begin{array}{ccc}
        \star & \star &1 \\
        \star & 1 & \star \\
        1 & \star & \star
    \end{array}\right].
\end{IEEEeqnarray*}
and the corresponding basic PDAs as follows:
\begin{IEEEeqnarray*}{cc}
    P_{\textnormal{basic}}^{(1)}=\left[
    \begin{array}{cc}
    \star & 1 \\
      2   & \star
    \end{array}
    \right],&\quad
    P_{\textnormal{basic}}^{(2)}=\left[
    \begin{array}{ccc}
        \star & \star & 1  \\
        \star & 2 & \star \\
        3 & \star & \star  \\
        \star & \star & 3  \\
        \star & 1 & \star  \\
        2 & \star & \star  
    \end{array}
    \right].
\end{IEEEeqnarray*}
The two basic PDAs use independent symbol labels. In the communication PDA
below, each delivery label displays the contributing basic symbols as an
ordered tuple (in reverse cluster order). Its subscript is a unique serial
number assigned when the group is formed; it distinguishes groups even when
their component tuples coincide.
Section~\ref{sec:entrywise} traces these labels through the construction.
Forming the Cartesian product of $P_{\textnormal{basic}}^{(1)}$ and
$P_{\textnormal{basic}}^{(2)}$ then gives
\begin{IEEEeqnarray}{c}
	\label{eq:illustrative_cart}
	P'=P_{\textnormal{basic}}^{(2)}\times P_{\textnormal{basic}}^{(1)} =\left[
	 \begin{array}{ccc:cc}
          \star & \star & 1 & \star & 1\\
         \star & \star & 1 & 2 & \star\\
         \star & 2 & \star & \star & 1\\
         \star & 2 & \star & 2 & \star\\
         3 & \star & \star & \star & 1\\
         3 & \star & \star & 2 & \star\\
         \star & \star & 3 & \star & 1\\
         \star & \star & 3 & 2 & \star\\
         \star & 1 & \star  & \star & 1\\
         \star & 1 & \star & 2 & \star\\
         2 & \star & \star  & \star & 1\\
         2 & \star & \star & 2 & \star
    \end{array}\right].
\end{IEEEeqnarray}
We consider reassigning the Reduce function of worker $k^{(1)}_2$ either to the surviving
worker $k^{(1)}_1$ or to a worker in cluster $2$. Because workers within cluster
$2$ are symmetric, considering $k^{(2)}_1$ suffices. The two candidate
assignment vectors are $(2,0,0,0,0)$ and $(1,0,1,0,0)$, respectively. The
entries correspond, in order, to workers $k^{(1)}_1,k^{(1)}_2$ and
$k^{(2)}_1,k^{(2)}_2,k^{(2)}_3$. We compare the candidates through their
simulated communication costs, without constructing the complete communication
PDA.

Take $(1,0,1,0,0)$. This assignment requires one episode, with
$\boldsymbol{\gamma}^1=(1,0,1,0,0)$. Define
$\mathcal{R}=\{1,2,3\}\times\{1,2\}$, where
$\boldsymbol{r}=(r_2,r_1)$ and $|\mathcal{R}|=6$. Partition the Cartesian
product as $P'=[P^{(2)},P^{(1)}]$. For each
$\boldsymbol{r}\in\mathcal{R}$, let $O^{(\boldsymbol{r})}_2$ count the groups
of entries in $P^{(2)}$ whose all-star row is $r_2$ and whose other coordinate
is $r_1$, as specified by \eqref{number_elem}; define
$O^{(\boldsymbol{r})}_1$ analogously. For $i\in[2]$, if no star column in row $r_i$ is both
connected and active, set $O^{(\boldsymbol{r})}_i=0$. The number of rounds for
$\boldsymbol r$ is
$O^{(\boldsymbol{r})}_{\max}=\max_{i\in[2]}O^{(\boldsymbol{r})}_i$.
Since $t_1=1$ and $t_2=2$, an active cluster-$1$ row contributes two groups,
whereas an active cluster-$2$ row contributes one. In the tuple order
$(r_2,r_1)=(1,1),(2,1),(3,1),(1,2),(2,2),(3,2)$, the six values of
$O^{(\boldsymbol r)}_{\max}$ are $(2,2,2,1,1,0)$. Their sum gives the
simulated cost $8$.

For $(2,0,0,0,0)$, both episodes have activity vector
$(1,0,0,0,0)$. Each has tuple-wise counts $(2,2,2,0,0,0)$, so its
simulated cost is $2(2+2+2)=12>8$. We therefore
assign the abandoned function to worker $k^{(2)}_1$. The communication PDA
$P_{\textnormal{com}}=P_{\textnormal{com}}^1$ is shown in \eqref{P_com}.
We use $\circ$ for a non-star entry whose group has no active destination;
such an entry generates no transmission. The entry-by-entry construction from
\eqref{eq:illustrative_cart} to \eqref{P_com} is given in
Section~\ref{sec:entrywise}.
\begin{figure*}
\footnotesize
\renewcommand{\arraystretch}{0.92}
\begin{IEEEeqnarray}{c}
    \label{P_com}
    P_{\textnormal{com}}=P_{\textnormal{com}}^1 = \left[
    \begin{array}{ccc:cc}
        k_1^{(2)} & k_2^{(2)} & k_3^{(2)} & k_1^{(1)} & k_2^{(1)}\\
        \hdashline
        \star & \star & \circ & \star & \circ\\
        \star & \star & \circ & \langle2,2\rangle_{2} & \star\\
        \star & \langle2,2\rangle_{2} & \star & \star & \circ\\
        \star & \langle2\rangle_{7} & \star & \langle3,2\rangle_{4} & \star \\
        \langle3,2\rangle_{4} & \star & \star & \star & \circ\\
        \langle3\rangle_{8} & \star & \star & \langle2\rangle_{6} & \star\\
        \star & \star & \langle3,2\rangle_{4} & \star & \circ\\
        \star & \star & \langle3\rangle_{8} & \langle2\rangle_{1} & \star\\
        \star & \circ & \star & \star & \circ\\
        \star & \circ & \star & \langle2\rangle_{3} & \star\\
        \langle2,2\rangle_{2} & \star & \star & \star & \circ\\
        \langle2\rangle_{7} & \star & \star & \langle2\rangle_{5} & \star
    \end{array}
    \right]    
\end{IEEEeqnarray}
\renewcommand{\arraystretch}{1}
\end{figure*}
The star pattern of $P_{\textnormal{com}}$ determines the following file
placement.
\begin{IEEEeqnarray*}{rCl}
	\mathcal{M}^{(1)}_1&=&\{1,3,5,7,9,11\},\\
	\mathcal{M}^{(1)}_2&=&\{2,4,6,8,10,12\},\\
	\mathcal{M}^{(2)}_1&=&\{1,2,3,4,7,8,9,10\},\\
	\mathcal{M}^{(2)}_2&=&\{1,2,5,6,7,8,11,12\},\\
	\mathcal{M}^{(2)}_3&=&\{3,4,5,6,9,10,11,12\}.
\end{IEEEeqnarray*}

During the Shuffle phase, we process each non-null delivery symbol in the
communication PDA. Consider, for instance, the element $\langle2,2\rangle_{2}$, which
occurs at row-column positions $(2,4)$, $(3,2)$, and $(11,1)$ in the displayed array. Extracting the subarray of
$P_{\textnormal{com}}$ at rows $\{2,3,11\}$ and columns $\{1,2,4\}$ yields
\begin{IEEEeqnarray*}{c}
\left[
    \begin{array}{c:ccc}
    &  k^{(2)}_1 & k^{(2)}_2 & k^{(1)}_1\\
    \hdashline
        2 &  \star & \star & \langle2,2\rangle_{2}\\
        3 &  \star & \langle2,2\rangle_{2} & \star\\
        11&  \langle2,2\rangle_{2} & \star & \star
    \end{array}\right],
\end{IEEEeqnarray*}
where only $k^{(1)}_1$ and $k^{(2)}_1$ are assigned Reduce functions, so worker
$k^{(2)}_2$, although it is assigned no Reduce function, acts as the transmitter and
multicasts $v_{1,2}\oplus v_{2,11}$ to workers $k^{(1)}_1$ and $k^{(2)}_1$.

For the element $\langle2\rangle_{1}$ occurring at $(8,4)$, the all-star columns are
$\{1,2\}$, i.e., $P_{\textnormal{com}}[8,1]=\star$ and
$P_{\textnormal{com}}[8,2]=\star$, and we take one of them, say column $1$, as
the transmitter. The subarray at row $8$ and columns $\{1,4\}$ reads
\begin{IEEEeqnarray*}{c}
\left[
    \begin{array}{c:cc}
    & k^{(2)}_1 & k^{(1)}_1\\
    \hdashline
        8 & \star & \langle2\rangle_{1}
    \end{array}\right].
\end{IEEEeqnarray*}
Worker $k^{(2)}_1$, which corresponds to this all-star column, transmits the IV
$v_{1,8}$ to worker $k^{(1)}_1$.

Considering the non-null delivery symbols one by one, we complete the transmission
procedure. The overall communication is as follows. Worker $k^{(2)}_1$
multicasts $v_{1,8}$ and $v_{1,10}$ to worker $k^{(1)}_1$; worker $k^{(2)}_2$
multicasts $v_{1,2}\oplus v_{2,11}$ to workers $k^{(1)}_1$ and $k^{(2)}_1$,
$v_{1,6}$ and $v_{1,12}$ to worker $k^{(1)}_1$, and $v_{2,12}$ to worker
$k^{(2)}_1$; worker $k^{(2)}_3$ multicasts $v_{1,4}\oplus v_{2,5}$ to workers
$k^{(1)}_1$ and $k^{(2)}_1$, and $v_{2,6}$ to worker $k^{(2)}_1$.
The communication load is $L=\frac{8}{12\cdot2}=\frac{1}{3}$.
 
Consider next the alternative assignment of the Reduce function to worker
$k^{(1)}_1$, for which the communication load is
$\frac{12}{12\cdot2}=\frac{1}{2}>\frac{1}{3}$. The simulated communication load
therefore lets us compare the communication loads of different reassignment
decisions without actually constructing the PDA.
\section{General Scheme}
\label{sec:scheme}
Before the Shuffle phase, the system contains
$C=C_{\textnormal{ini}}+C_{\textnormal{new}}$ clusters. First, cluster-wise
basic PDAs and their Cartesian product are constructed. This product
determines the file placement and, after adjustment, the coded transmissions.
An arriving cluster can extend the existing Cartesian product immediately, and
only the final Shuffle transmissions depend on all clusters that have joined.

\subsection{File placement}
\label{sec:placement}
We use the PDA representation \cite{7973185} of the MN
scheme \cite{li2017fundamental}.

\subsubsection{Construction of MN PDAs} For each cluster $c\in[C]$, construct a
$(t_c+1)$-regular
$\bigl(K_c,\binom{K_c}{t_c},\binom{K_c-1}{t_c-1},
\binom{K_c}{t_c+1}\bigr)$ MN PDA $P_{\textnormal{MN}}^{(c)}$ as follows.

Enumerate the $(t_c+1)$-subsets of $[K_c]$ as
$\mathcal{S}^{(c)}_1,\ldots,\mathcal{S}^{(c)}_{\binom{K_c}{t_c+1}}$, and
enumerate the $t_c$-subsets as
$\mathcal T^{(c)}_1,\ldots,\mathcal T^{(c)}_{\binom{K_c}{t_c}}$ to index
the rows of $P^{(c)}_{\textnormal{MN}}$. For a $t_c$-subset
$\mathcal T\subseteq[K_c]$ and a column
$k\in[K_c]$,
\begin{IEEEeqnarray*}{rCl}
    P_{\textnormal{MN}}^{(c)}[\mathcal{T},k]=\left\{
        \begin{array}{ll}
            \star, & k\in \mathcal{T}, \\
            s, & k \notin \mathcal{T},\ \mathcal{T}\cup\{k\}=\mathcal{S}^{(c)}_s
        \end{array}.
    \right.
\end{IEEEeqnarray*}
Here the non-star label $s\in[\binom{K_c}{t_c+1}]$ is the unique index
such that $\mathcal T\cup\{k\}=\mathcal S_s^{(c)}$.
\subsubsection{Construction of Basic PDAs from MN PDAs}
For each $P_{\textnormal{MN}}^{(c)}$, $c\in[C]$, the construction of
\cite{wang2023placement} yields a $t_c$-regular
$\bigl(K_c,t_c\binom{K_c}{t_c},t_c\binom{K_c-1}{t_c-1},
(t_c+1)\binom{K_c}{t_c+1}\bigr)$ basic PDA
$P_{\textnormal{basic}}^{(c)}$ with repetition factor $t_c$. Stack $t_c$ copies of $P^{(c)}_{\textnormal{MN}}$ vertically. For each integer
$s\in [\binom{K_c}{t_c+1}]$ of the MN PDA, generate the $t_c+1$ consecutive
integers $(t_c+1)(s-1)+1,\ldots,(t_c+1)s$. Cyclically assigning these new
integers to the $t_c+1$ occurrences of $s$ in each copy produces
$P_{\textnormal{basic}}^{(c)}$.

\subsubsection{Cartesian Product}
Form the Cartesian product of the $C$ basic PDAs in reverse cluster order:
\begin{IEEEeqnarray*}{rCl}
		P_{\textnormal{cart}}=P_{\textnormal{basic}}^{(C)}\times\ldots\times P_{\textnormal{basic}}^{(1)},
\end{IEEEeqnarray*}
The resulting array has $K$ columns and
\begin{IEEEeqnarray*}{rCl}
\FPDA=\prod_{c\in[C]}t_c\binom{K_c}{t_c}
\end{IEEEeqnarray*}
rows. Each row represents one file batch.
Accordingly, the file count used by the construction is chosen such that the
$\FPDA$ batches have equal integer size, i.e.,
\begin{IEEEeqnarray*}{rCl}
\frac{N}{\FPDA}\in\mathbb N^+.
\end{IEEEeqnarray*}
\begin{remark}
When cluster $i\in\mathcal C_{\rm new}$ arrives, $P_{\textnormal{basic}}^{(i)}$ can be multiplied by
the existing product. For a cluster already present, the operation merely
refines each previous batch into subbatches, so it leaves unchanged the set of
files cached by its workers.
\end{remark}

The $N$ files are divided into $\FPDA$ equal, disjoint batches, one per row of $P_{\textnormal{cart}}$. For $c\in[C]$, $i\in[K_c]$, and $b\in[\FPDA]$, worker $k_i^{(c)}$ caches all files in batch $b$ if and only if
	\begin{IEEEeqnarray*}{rCl}
		P_{\textnormal{cart}}\Bigl[b,\; \sum_{a=c+1}^{C} K_a + i\Bigr] = \star.
	\end{IEEEeqnarray*} 
Write $t_0=\sum_{c\in\mathcal C_{\rm ini}}t_c$ and
$\tau=\sum_{c\in[C]}t_c$ for the initial and total replication levels,
respectively. Every row of a cluster-$c$ basic PDA has exactly $t_c$ stars,
so every file has exactly $t_c$ copies in cluster $c$ and $\tau$ copies
overall before disconnections.

\begin{proposition}[Disconnection tolerance]
\label{prop:feasibility}
For the Cartesian-product placement, let
$\delta_c=|\mathcal D\cap\mathcal K_c|$, $c\in[C]$, and let $R_n$ denote
the number of connected workers caching file $n$. Then
\begin{IEEEeqnarray}{rCl}
\min_{n\in[N]}R_n
&=&\sum_{c\in[C]}\max\{t_c-\delta_c,0\}.
\label{eq:min_surviving_copies}
\end{IEEEeqnarray}
Every file remains available, and hence the proposed scheme can complete the
computation, if and only if $\delta_c<t_c$ for at least one cluster $c$.
In particular, the scheme tolerates every disconnection set satisfying
\begin{IEEEeqnarray}{rCl}
|\mathcal D|&<&\tau
=\sum_{c\in[C]}\frac{K_cM_c}{N}.
\label{eq:disconnection_guarantee}
\end{IEEEeqnarray}
This worst-case guarantee is tight: there exists a disconnection set of
size $\tau$ for which a file has no surviving copy.
\end{proposition}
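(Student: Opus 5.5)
The plan is to decompose a file's copy count by cluster and then optimize over rows of the Cartesian product one cluster at a time. Fix a file $n$ in batch $b$, where row $b$ corresponds to the tuple $(f_C,\ldots,f_1)$ with $f_c$ a row of $P_{\textnormal{basic}}^{(c)}$. By the definition of the Cartesian product, the cluster-$c$ workers caching $n$ are exactly the stars of row $f_c$ of $P_{\textnormal{basic}}^{(c)}$. This star set is a $t_c$-subset of $[K_c]$; it equals the row index $\mathcal T$ of the underlying MN PDA, since the basic PDA consists of stacked copies of $P_{\textnormal{MN}}^{(c)}$ with relabeled integers, and relabeling leaves the star pattern unchanged. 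Hence
\[
R_n=\sum_{c\in[C]}\bigl(t_c-|\mathcal T_c\cap \mathcal D_c|\bigr),
\]
where $\mathcal D_c$ denotes the disconnected workers of cluster $c$, viewed as a subset of $[K_c]$.

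First we show the right-hand side of \eqref{eq:min_surviving_copies} is a lower bound. Each term satisfies $|\mathcal T_c\cap\mathcal D_c|\le\min\{t_c,\delta_c\}$, so every summand is at least $\max\{t_c-\delta_c,0\}$.

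Next we show the bound is attained. The key observation is that the rows of the product range over all tuples $(f_C,\ldots,f_1)$ independently, and in each cluster every $t_c$-subset of $[K_c]$ occurs as a star set. In each cluster $c$ we therefore choose $\mathcal T_c$ to maximize the overlap with $\mathcal D_c$. If $\delta_c\le t_c$, take $\mathcal T_c\supseteq\mathcal D_c$. If $\delta_c> t_c$, take $\mathcal T_c\subseteq\mathcal D_c$; this is possible because $t_c\le K_c-1$ and $\delta_c\le K_c$. Combining these choices into a single product row gives a batch whose files achieve the minimum, since batches are nonempty by $N/\FPDA\in\mathbb N^+$. This independence across clusters is the one step needing care, and it is exactly what the Cartesian structure provides.

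The remaining claims follow. Every file survives iff $\min_n R_n\ge1$, which holds iff some $c$ has $t_c-\delta_c>0$. If $|\mathcal D|<\tau$, then $\sum_c\delta_c<\sum_c t_c$, so some $\delta_c<t_c$. For tightness, pick an arbitrary row and let $\mathcal D$ be its $\tau$ star columns. Then $\delta_c=t_c$ for all $c$, so the files of that batch have no surviving copy.
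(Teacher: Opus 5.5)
Your proof is correct and follows the paper's argument essentially step for step. The steps match: the cluster-wise bound $\max\{t_c-\delta_c,0\}$; simultaneous attainment in a single batch, because every $t_c$-subset is a star set and the Cartesian product realizes every combination across clusters; the pigeonhole step for $|\mathcal D|<\tau$; and disconnecting the $\tau$ caching workers of one batch for tightness. The one thing the paper adds, which you should include for the ``hence the proposed scheme can complete the computation'' clause, is the remark that once every file survives there is a connected worker to receive each abandoned function and the three delivery cases of Section~\ref{sec:delivery} are exhaustive, whereas a lost file leaves its IVs unavailable to everyone.
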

\begin{IEEEproof}
In cluster $c$, the workers caching a batch form a $t_c$-subset of
$[K_c]$. Among these $t_c$ workers, at most $\delta_c$ disconnect, so
every batch has at least $\max\{t_c-\delta_c,0\}$ surviving copies in
that cluster. This minimum is attained by choosing a $t_c$-subset
containing all disconnected workers if $\delta_c<t_c$, or contained in
the disconnected workers if $\delta_c\geq t_c$.
The MN star pattern includes every $t_c$-subset, and the Cartesian
product includes every combination of these subsets across clusters.
Thus one batch attains all cluster-wise minima simultaneously. Every
batch contains $N/\FPDA\geq1$ files, which proves
\eqref{eq:min_surviving_copies}.

All files have a surviving copy precisely when the right-hand side of
\eqref{eq:min_surviving_copies} is positive, equivalently when
$\delta_c<t_c$ for some $c$. Under this condition there is a connected
worker to receive each abandoned function, and the exhaustive delivery
cases in Section~\ref{sec:delivery} recover every missing IV.
Conversely, a file with no surviving copy has no available IVs in the
uncoded placement and cannot be recovered by the Shuffle transmissions.
If $|\mathcal D|<\tau$, the inequalities $\delta_c\geq t_c$ cannot hold
for all clusters, proving \eqref{eq:disconnection_guarantee}. Finally,
disconnecting exactly the $\tau$ workers caching any fixed batch removes
all copies of its files, proving tightness.
\end{IEEEproof}

The count-only bound guarantees feasibility for any distribution of departures
among initial and newly joined clusters. Larger disconnection sets can also
be feasible when some cluster still satisfies $\delta_c<t_c$.
These statements concern file availability after the arriving caches have
been populated and before Shuffle begins. With no arrivals, $\tau=t_0$.

\subsection{Construction of the Communication PDA}
\label{sec:comPDA}
We construct the communication PDA by replacing the non-star entries of
$P_{\textnormal{cart}}$ with new symbols. All positions assigned the same
new symbol must satisfy the PDA cross-star condition C3 in
Definition~\ref{def:pda}, so that the corresponding IVs can be delivered
together. The stars remain unchanged because they specify the placement.

A worker may be assigned several Reduce functions. Given
$\boldsymbol\gamma$, for each $k\in[K]$ order its functions as
$\mathcal Q_k=\{q_{k,1},\ldots,q_{k,\gamma_k}\}$ and use
$U=\max_{k\in[K]}\gamma_k$ episodes. Here $q_{k,j}\in[Q]$ is defined
for $j\in[\gamma_k]$. In episode $u\in[U]$, worker $k$ handles
$\phi_{q_{k,u}}$ if $u\leq\gamma_k$, and otherwise has no demand.
Thus each worker handles at most one function per episode. Write
$\gamma_k^u=\mathbf1\{u\leq\gamma_k\}$ for $k\in[K]$, $u\in[U]$, and
$\boldsymbol\gamma^u=(\gamma_1^u,\ldots,\gamma_K^u)$.
We construct a separate array $P_{\textnormal{com}}^u$ for each episode.
This decomposition is a choice of our scheme; other schemes may code
jointly across Reduce functions.

To identify entries that can share a new symbol, partition
$P_{\textnormal{cart}}$ into its $C$ column blocks,
$P_{\textnormal{cart}}=[P^{(C)},\ldots,P^{(1)}]$. A row is indexed by
$\boldsymbol f=(f_C,\ldots,f_1)$, where
$f_c\in[t_c\binom{K_c}{t_c}]$ for every $c\in[C]$.
For $i\in[C]$, block $i$ depends only on its
$i$-th row coordinate:
\begin{IEEEeqnarray*}{rCl}
P^{(i)}[\boldsymbol f,\ell]
&=&P_{\textnormal{basic}}^{(i)}[f_i,\ell],\qquad \ell\in[K_i].
\end{IEEEeqnarray*}
Fixing all coordinates except $f_i$ therefore gives a copy of the
basic PDA in block $i$. The $t_i$ occurrences of one basic symbol
$s_i$ in this copy already satisfy the PDA cross-star condition and can
be kept as one group.
To combine this group with a group of symbol $s_j$ from block
$j\in[C]\setminus\{i\}$,
every row containing a selected occurrence of $s_i$ must have stars
in the occurrence columns of $s_j$.
Since block $j$ depends only on coordinate $f_j$, it suffices to choose
$f_j$ to be an all-star row of $s_j$. Applying the same requirement in
the opposite direction makes both cross entries stars. We implement
this condition by choosing one all-star row index for each block and
using the resulting tuple throughout a round of combinations.

For cluster $c\in[C]$, the MN PDA has
$B_c=\binom{K_c}{t_c}$ rows, indexed by the $t_c$-subsets
$\mathcal T^{(c)}_1,\ldots,\mathcal T^{(c)}_{B_c}$ of $[K_c]$.
Row $r_c\in[B_c]$ has stars exactly in columns $\mathcal T^{(c)}_{r_c}$.
The basic-PDA construction repeats these star locations in rows
$r_c+h_cB_c$, $h_c=0,\ldots,t_c-1$. Thus any of these copies can serve
as the required all-star row. Choose a tuple
$\boldsymbol r=(r_C,\ldots,r_1)$ from
\begin{IEEEeqnarray*}{rCl}
\mathcal R&\triangleq&[B_C]\times\cdots\times[B_1].
\end{IEEEeqnarray*}
For each $i\in[C]$, let $S_i=(t_i+1)\binom{K_i}{t_i+1}$ be the
number of basic symbols, and let $\theta_i:[S_i]\to[B_i]$ be the
all-star row map specified by the basic-PDA
condition C5 in Definition~\ref{def:basic_pda} for
$P_{\textnormal{basic}}^{(i)}$. The symbols available in block $i$ for
this choice of $r_i$ are
\begin{IEEEeqnarray*}{rCl}
\mathcal H_{i,r_i}&\triangleq&\{s\in[S_i]:\theta_i(s)=r_i\}.
\end{IEEEeqnarray*}
Each $s_i\in\mathcal H_{i,r_i}$ has $r_i$ as an all-star row. The same
condition assigns $|\mathcal H_{i,r_i}|=K_i-t_i$ symbols to each such row.

For each $i\in[C]$ and $s_i\in\mathcal H_{i,r_i}$, fix one copy index
$h_c\in[0:t_c-1]$ for every $c\in[C]\setminus\{i\}$ and collect
the entries in block $i$ satisfying
\begin{IEEEeqnarray}{rCl}
P_{\textnormal{basic}}^{(i)}[f_i,\ell]&=&s_i,\nonumber\\
f_c&=&r_c+h_cB_c,\qquad c\in[C]\setminus\{i\}.
\label{eq:compatible_group_coordinates}
\end{IEEEeqnarray}
These are the $t_i$ occurrences of $s_i$ with the other row coordinates
held fixed. The first line preserves the PDA cross-star condition
within this group; the second ensures cross stars with groups from
other blocks. Indeed, for an occurrence
of a selected $s_j\in\mathcal H_{j,r_j}$ in local column $\ell_j\in[K_j]$,
row $r_j$ is a star in column $\ell_j$. Every selected block-$i$ row
has coordinate $f_j=r_j+h_jB_j$, so its entry in that block-$j$ column
is also a star. The same argument gives the opposite cross entry.
Consequently, one group from each of any selected blocks can share a
new symbol. The copy indices may differ between groups because all
copies retain the same star columns.

For this $\boldsymbol r$, block $i$ contributes one group for each
choice of $s_i$ and the other row copies, hence
\begin{IEEEeqnarray*}{rCl}
V_i&=&|\mathcal H_{i,r_i}|\prod_{c\in[C]\setminus\{i\}}t_c\\
&=&(K_i-t_i)\prod_{c\in[C]\setminus\{i\}}t_c,\qquad i\in[C].
\end{IEEEeqnarray*}
Processing all $\boldsymbol r\in\mathcal R$ covers the non-star
entries exactly once. For an entry in block $i$, its basic symbol fixes
$r_i$ through $\theta_i$, while each other row coordinate uniquely
specifies $r_c$ and $h_c$ in \eqref{eq:compatible_group_coordinates}.

The candidate groups depend only on the placement. Fix an episode $u\in[U]$;
we retain those needed by connected workers that have a Reduce function
to compute. Define
\begin{IEEEeqnarray*}{rCl}
\mathcal K_{\rm act}^u
&=&\{k\in\bar{\mathcal D}:\gamma_k^u>0\},\\
\mathcal C_{\rm act}^u
&=&\{c\in[C]:\mathcal K_c\cap\mathcal K_{\rm act}^u\ne\emptyset\}.
\end{IEEEeqnarray*}
A symbol in $\mathcal H_{i,r_i}$ occurs in $t_i$ distinct columns,
all of which are stars in row $r_i$. Since that row has exactly $t_i$
stars, these are precisely its occurrence columns. In global worker
indices, this common column set is
\begin{IEEEeqnarray*}{rCl}
\mathcal K_i^\star(r_i)
&\triangleq&\{k_\ell^{(i)}:\ell\in\mathcal T^{(i)}_{r_i}\}\\
&=&\{k_\ell^{(i)}:\ell\in[K_i],\ P_{\textnormal{basic}}^{(i)}[r_i,\ell]=\star\}.
\end{IEEEeqnarray*}
Thus all candidate groups from block $i$ for the current
$\boldsymbol r$ have a demand if
$\mathcal K_i^\star(r_i)\cap\mathcal K_{\rm act}^u\ne\emptyset$;
otherwise, all are discarded before combination. A retained group
keeps all its occurrences, including those in inactive columns,
because at least one occurrence has an active destination. List the
retained groups in each block in a fixed order. For $i\in[C]$, the
list length is given below, with dependence on the fixed episode $u$
suppressed in $O_i^{(\boldsymbol r)}$:
\begin{IEEEeqnarray}{rCl}
O_i^{(\boldsymbol r)}
&=&\begin{cases}
V_i,&\mathcal K_i^\star(r_i)\cap\mathcal K_{\rm act}^u\ne\emptyset,\\
0,&\text{otherwise}.
\end{cases}
\label{number_elem}
\end{IEEEeqnarray}

Each combination takes at most one group from each block. Therefore,
for a fixed $\boldsymbol r$, at least
$O_{\max}^{(\boldsymbol r)}=\max_{i\in[C]}O_i^{(\boldsymbol r)}$ rounds are
needed to exhaust these lists. We attain this number as follows. If
all clusters in $\mathcal C_{\rm act}^u$ have the same positive
remaining list length, take one group from every such cluster.
Otherwise, take one group from each of up to
$|\mathcal C_{\rm act}^u|-1$ clusters with the largest positive
remaining lengths, taking all positive lists if fewer remain and
breaking ties by cluster index. Denote the selected clusters by
$\mathcal N\subseteq\mathcal C_{\rm act}^u$. This rule reduces the largest remaining list length by
one in every round, so all lists are exhausted after exactly
$O_{\max}^{(\boldsymbol r)}$ rounds. If all lists are empty, this tuple
requires no round.

To carry out the replacements, initialize $P_{\textnormal{com}}^u$
with the stars of $P_{\textnormal{cart}}$ and put $\circ$ in every
non-star position. Process the tuples $\boldsymbol r$ in a fixed order
and number all their rounds consecutively by
$g\in[\sum_{\boldsymbol r\in\mathcal R}O_{\max}^{(\boldsymbol r)}]$
within the episode. In round $g$, remove the first group from each selected
list. Write the selected cluster set as
$\mathcal N=\{i_1,\ldots,i_m\}$, where $m=|\mathcal N|\in[C]$ and
$i_1>\cdots>i_m$. For each $i\in\mathcal N$, the selected group is
specified by its basic symbol $s_i\in\mathcal H_{i,r_i}$ and its
fixed copy indices $h_c^{(i,g)}\in[0:t_c-1]$,
$c\in[C]\setminus\{i\}$. Construct the new delivery symbol
\begin{IEEEeqnarray}{rCl}
\sigma_g&\triangleq&\langle s_{i_1},\ldots,s_{i_m}\rangle_g.
\label{eq:new_delivery_symbol}
\end{IEEEeqnarray}
For each $j\in[m]$, the $j$-th component $s_{i_j}$ is the original
basic-PDA symbol of the group selected from block $i_j$; it satisfies
$\theta_{i_j}(s_{i_j})=r_{i_j}$. Thus each component identifies the
basic symbol contributed by one selected cluster, with the components
ordered by decreasing cluster index. The subscript $g$ identifies the
combination round within the fixed episode $u$ and is part of the new
label, so distinct rounds produce distinct labels even if their
component tuples coincide. The copy indices specify which occurrences
are selected, as follows.

For $i\in\mathcal N$, define the selected positions in block $i$ by
\begin{IEEEeqnarray}{rCl}
\mathcal G_{i,g}&\triangleq&\bigl\{(\boldsymbol f,\ell):
\ell\in[K_i],\ P_{\textnormal{basic}}^{(i)}[f_i,\ell]=s_i,
\nonumber\\
&&\quad f_c=r_c+h_c^{(i,g)}B_c,\ \forall c\in[C]\setminus\{i\}
\bigr\}.
\label{eq:selected_group_positions}
\end{IEEEeqnarray}
Here $\boldsymbol f=(f_C,\ldots,f_1)$ ranges over the Cartesian-product
rows, with $f_c\in[t_cB_c]$ for every $c\in[C]$, and $\ell$ is a
local column index in block $i$. This set contains exactly the $t_i$
occurrences of $s_i$ in the selected copy of the basic PDA. For every
$i\in\mathcal N$ and $(\boldsymbol f,\ell)\in\mathcal G_{i,g}$, perform
the replacement
\begin{IEEEeqnarray}{rCl}
P_{\textnormal{com}}^u
\Bigl[\boldsymbol f,\sum_{a=i+1}^{C}K_a+\ell\Bigr]
&\leftarrow&\sigma_g.
\label{eq:delivery_symbol_replacement}
\end{IEEEeqnarray}
The column $\sum_{a=i+1}^{C}K_a+\ell$ is the physical column of
local worker $\ell$ in block $i$. Each replaced position contains
$s_i$ in $P_{\textnormal{cart}}$ and the initial marker $\circ$ in
$P_{\textnormal{com}}^u$. Hence this round assigns $\sigma_g$ to exactly
$\sum_{i\in\mathcal N}t_i$ positions: all occurrences in the selected
group from each selected block. Other copies of the same basic symbol
are processed through their own groups. Positions belonging to
discarded groups remain $\circ$ and generate no transmission. The resulting array is
a generalized PDA: different clusters may have different numbers of
stars per column, while every delivery symbol satisfies the PDA
cross-star condition.

\begin{proposition}[Cross-star property and coverage]
\label{prop:communication_pda}
Every non-null symbol of $P_{\textnormal{com}}^u$ satisfies condition C3
of Definition~\ref{def:pda}, and every
noncached batch requested by an active worker in episode $u$ is represented by
exactly one such symbol.
\end{proposition}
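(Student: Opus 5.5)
The plan is to prove the two claims separately: first the cross-star property C3 for an arbitrary delivery symbol $\sigma_g$, then coverage. Both reduce to the structure of the selected sets $\mathcal G_{i,g}$ in \eqref{eq:selected_group_positions} together with the basic-PDA conditions C3--C5 of $P_{\textnormal{basic}}^{(i)}$. As a preliminary, I will show that distinct groups, whether from different tuples $\boldsymbol r$, different rounds, or different blocks, occupy disjoint positions. This ensures that a replacement in \eqref{eq:delivery_symbol_replacement} never overwrites an earlier one, so the positions carrying $\sigma_g$ are exactly $\bigcup_{i\in\mathcal N}\mathcal G_{i,g}$. Disjointness will follow from the uniqueness argument used for coverage below.

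\textbf{Cross-star property.} Fix a round $g$ with selected set $\mathcal N$, and take two distinct positions $(\boldsymbol f,\ell)\in\mathcal G_{i,g}$ and $(\boldsymbol f',\ell')\in\mathcal G_{j,g}$. I will split into two cases.

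\emph{Case $i=j$.} Both positions share every coordinate $f_c$ with $c\ne i$, namely $f_c=r_c+h_c^{(i,g)}B_c$, and carry the same basic symbol $s_i$ in block $i$.
\begin{itemize}
\item Condition C3 for $P_{\textnormal{basic}}^{(i)}$ gives $f_i\ne f_i'$ and $\ell\ne\ell'$.
\item It also gives $P_{\textnormal{basic}}^{(i)}[f_i,\ell']=P_{\textnormal{basic}}^{(i)}[f_i',\ell]=\star$.
\item Since block $i$ depends only on the $i$-th row coordinate, these are exactly the required cross entries of $P_{\textnormal{com}}^u$, and stars are never replaced.
\end{itemize}

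\emph{Case $i\ne j$.} The columns lie in different blocks, so they are distinct. For the cross entry at row $\boldsymbol f$ and the block-$j$ column $\ell'$, we have $f_j=r_j+h_j^{(i,g)}B_j$. By C4 this row has the same star pattern as row $r_j$, and row $r_j$ is an all-star row of $s_j$ because $\theta_j(s_j)=r_j$ (C5). Since $\ell'$ is an occurrence column of $s_j$, the cross entry is $\star$. The symmetric argument gives the other cross entry. Distinctness of the rows follows from the same fact: $f_j$ has a star in column $\ell'$, whereas $f'_j$ has the non-star entry $s_j$ there, so $f_j\ne f'_j$.

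\textbf{Coverage.} A batch $\boldsymbol f$ is noncached and requested by an active worker $k_\ell^{(i)}\in\mathcal K_{\rm act}^u$ exactly when $s=P_{\textnormal{basic}}^{(i)}[f_i,\ell]$ is a non-star entry. I will show this position lies in exactly one group, as follows.
\begin{itemize}
\item Set $r_i=\theta_i(s)$.
\item For each $c\ne i$, write $f_c=r_c+h_cB_c$ with $r_c\in[B_c]$ and $h_c\in[0:t_c-1]$. This decomposition is unique by the row ordering of the stacked copies.
\item Hence the triple $(\boldsymbol r,s,(h_c)_{c\ne i})$ is uniquely determined by the position. This is exactly one candidate group in \eqref{eq:compatible_group_coordinates}, and it gives the disjointness claimed above.
\end{itemize}
Next I will show this group is retained. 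The occurrence columns of $s$ coincide with $\mathcal K_i^\star(r_i)$, as observed before \eqref{number_elem}, and $k_\ell^{(i)}$ is one of them and is active. Hence $\mathcal K_i^\star(r_i)\cap\mathcal K_{\rm act}^u\ne\emptyset$, and the group appears in the block-$i$ list of length $O_i^{(\boldsymbol r)}$. The round rule exhausts every list after $O_{\max}^{(\boldsymbol r)}$ rounds and removes each group once. Therefore the group is selected in exactly one round $g$, and the position receives exactly one non-null symbol $\sigma_g$.

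I expect the main obstacle to be bookkeeping rather than any new idea. The delicate points are the uniqueness of the decomposition $(\boldsymbol r,h)$, which underlies both the disjointness of groups and the non-overwriting of replacements, and the precise use of C4. The all-star property is stated for row $r_j$, but the selected rows use the copy $r_j+h_jB_j$, so C4 must be invoked explicitly to transfer it. The termination claim for the round schedule is taken from the construction itself.
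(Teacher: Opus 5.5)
Your proposal is correct and follows essentially the same route as the paper's proof. Within-block cross stars come from C3 of the basic PDA, cross-block stars from the all-star rows fixed by the shared tuple $\boldsymbol r$, and coverage from the unique candidate group determined by $\theta_i$ and the copy decomposition $f_c=r_c+h_cB_c$, followed by exhaustion of the retained lists. Your version only makes explicit what the paper leaves implicit: the disjointness of groups (so no replacement overwrites another), the use of C4 to pass from row $r_j$ to its copy $r_j+h_jB_j$, and row distinctness argued through block $j$ instead of block $i$, which is symmetric.
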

\begin{IEEEproof}
Within one selected group, the PDA cross-star condition follows from
the corresponding basic PDA.
For groups from distinct blocks $i$ and $j$, the coordinate rule
\eqref{eq:compatible_group_coordinates} puts each occurrence in an
all-star row of the other block's symbol. Both cross entries are
therefore stars. Their rows are distinct because the block-$i$
occurrence lies in an occurrence row of $s_i$, whereas the block-$j$
occurrence has a copy of an all-star row of $s_i$ as its $i$-th
coordinate. Their columns are distinct because they belong to
different blocks. Each round uses a fresh label, so the PDA cross-star
condition holds throughout the array.

Every non-star position belongs to exactly one candidate group by
\eqref{eq:compatible_group_coordinates} and the fixed all-star row maps
$\theta_i$. If its worker
is active, that group is retained. Exhausting the retained lists thus
labels every requested position exactly once; discarded groups
contain no active destination.
\end{IEEEproof}

\subsubsection{From the Cartesian product to the illustrative PDA}
\label{sec:entrywise}
Figures~\ref{fig:pda_steps_one} and~\ref{fig:pda_steps_two} apply the
construction to $P'$ in \eqref{eq:illustrative_cart}. The displayed
columns are $(k_1^{(2)},k_2^{(2)},k_3^{(2)},k_1^{(1)},k_2^{(1)})$;
only columns $1$ and $4$ are active in episode $1$. A row vector
$(f_2,f_1)\in[6]\times[2]$ occupies physical row $f=2(f_2-1)+f_1$.
We use $(f,k)\in[12]\times[5]$ below for a position in the displayed
matrix. Gray circles indicate
non-star positions not yet relabeled, and blue boxes mark the entries
replaced since the preceding panel.

\begingroup
\newcommand{\PDAcell}[1]{\makebox[7em]{$\textstyle #1$}}
\newcommand{\PDAchanged}[1]{{\color{blue!65!black}\boxed{\textstyle #1}}}
\newcommand{\PDApending}[1]{{\color{black!48}#1}}

\begin{figure*}[t]
\centering
\scriptsize
\setlength{\arraycolsep}{1pt}
\renewcommand{\arraystretch}{1.0}
\begin{minipage}[t]{0.487\textwidth}
\centering
\parbox[t][2.6\baselineskip][t]{\linewidth}{\centering (a) Original Cartesian product $P'$}
\resizebox{\linewidth}{!}{$
\left[\begin{array}{r|ccc:cc}
f & \PDAcell{k_1^{(2)}} & \PDAcell{k_2^{(2)}} & \PDAcell{k_3^{(2)}} &
\PDAcell{k_1^{(1)}} & \PDAcell{k_2^{(1)}}\\[2pt]
\hline
1\vphantom{\boxed{8}} & \PDAcell{\star} & \PDAcell{\star} & \PDAcell{\PDApending{1}} & \PDAcell{\star} & \PDAcell{\PDApending{1}} \\
2\vphantom{\boxed{8}} & \PDAcell{\star} & \PDAcell{\star} & \PDAcell{\PDApending{1}} & \PDAcell{\PDApending{2}} & \PDAcell{\star} \\
3\vphantom{\boxed{8}} & \PDAcell{\star} & \PDAcell{\PDApending{2}} & \PDAcell{\star} & \PDAcell{\star} & \PDAcell{\PDApending{1}} \\
4\vphantom{\boxed{8}} & \PDAcell{\star} & \PDAcell{\PDApending{2}} & \PDAcell{\star} & \PDAcell{\PDApending{2}} & \PDAcell{\star} \\
5\vphantom{\boxed{8}} & \PDAcell{\PDApending{3}} & \PDAcell{\star} & \PDAcell{\star} & \PDAcell{\star} & \PDAcell{\PDApending{1}} \\
6\vphantom{\boxed{8}} & \PDAcell{\PDApending{3}} & \PDAcell{\star} & \PDAcell{\star} & \PDAcell{\PDApending{2}} & \PDAcell{\star} \\
7\vphantom{\boxed{8}} & \PDAcell{\star} & \PDAcell{\star} & \PDAcell{\PDApending{3}} & \PDAcell{\star} & \PDAcell{\PDApending{1}} \\
8\vphantom{\boxed{8}} & \PDAcell{\star} & \PDAcell{\star} & \PDAcell{\PDApending{3}} & \PDAcell{\PDApending{2}} & \PDAcell{\star} \\
9\vphantom{\boxed{8}} & \PDAcell{\star} & \PDAcell{\PDApending{1}} & \PDAcell{\star} & \PDAcell{\star} & \PDAcell{\PDApending{1}} \\
10\vphantom{\boxed{8}} & \PDAcell{\star} & \PDAcell{\PDApending{1}} & \PDAcell{\star} & \PDAcell{\PDApending{2}} & \PDAcell{\star} \\
11\vphantom{\boxed{8}} & \PDAcell{\PDApending{2}} & \PDAcell{\star} & \PDAcell{\star} & \PDAcell{\star} & \PDAcell{\PDApending{1}} \\
12\vphantom{\boxed{8}} & \PDAcell{\PDApending{2}} & \PDAcell{\star} & \PDAcell{\star} & \PDAcell{\PDApending{2}} & \PDAcell{\star} \\
\end{array}\right]$}
\end{minipage}
\renewcommand{\PDApending}[1]{{\color{black!48}\circ}}
\hfill
\begin{minipage}[t]{0.487\textwidth}
\centering
\parbox[t][2.6\baselineskip][t]{\linewidth}{\centering (b) First round of tuple $d=1$: replace the singleton}
\resizebox{\linewidth}{!}{$
\left[\begin{array}{r|ccc:cc}
f & \PDAcell{k_1^{(2)}} & \PDAcell{k_2^{(2)}} & \PDAcell{k_3^{(2)}} &
\PDAcell{k_1^{(1)}} & \PDAcell{k_2^{(1)}}\\[2pt]
\hline
1\vphantom{\boxed{8}} & \PDAcell{\star} & \PDAcell{\star} & \PDAcell{\PDApending{1}} & \PDAcell{\star} & \PDAcell{\PDApending{1}} \\
2\vphantom{\boxed{8}} & \PDAcell{\star} & \PDAcell{\star} & \PDAcell{\PDApending{1}} & \PDAcell{\PDApending{2}} & \PDAcell{\star} \\
3\vphantom{\boxed{8}} & \PDAcell{\star} & \PDAcell{\PDApending{2}} & \PDAcell{\star} & \PDAcell{\star} & \PDAcell{\PDApending{1}} \\
4\vphantom{\boxed{8}} & \PDAcell{\star} & \PDAcell{\PDApending{2}} & \PDAcell{\star} & \PDAcell{\PDApending{2}} & \PDAcell{\star} \\
5\vphantom{\boxed{8}} & \PDAcell{\PDApending{3}} & \PDAcell{\star} & \PDAcell{\star} & \PDAcell{\star} & \PDAcell{\PDApending{1}} \\
6\vphantom{\boxed{8}} & \PDAcell{\PDApending{3}} & \PDAcell{\star} & \PDAcell{\star} & \PDAcell{\PDApending{2}} & \PDAcell{\star} \\
7\vphantom{\boxed{8}} & \PDAcell{\star} & \PDAcell{\star} & \PDAcell{\PDApending{3}} & \PDAcell{\star} & \PDAcell{\PDApending{1}} \\
8\vphantom{\boxed{8}} & \PDAcell{\star} & \PDAcell{\star} & \PDAcell{\PDApending{3}} & \PDAcell{\PDAchanged{\langle2\rangle_{1}}} & \PDAcell{\star} \\
9\vphantom{\boxed{8}} & \PDAcell{\star} & \PDAcell{\PDApending{1}} & \PDAcell{\star} & \PDAcell{\star} & \PDAcell{\PDApending{1}} \\
10\vphantom{\boxed{8}} & \PDAcell{\star} & \PDAcell{\PDApending{1}} & \PDAcell{\star} & \PDAcell{\PDApending{2}} & \PDAcell{\star} \\
11\vphantom{\boxed{8}} & \PDAcell{\PDApending{2}} & \PDAcell{\star} & \PDAcell{\star} & \PDAcell{\star} & \PDAcell{\PDApending{1}} \\
12\vphantom{\boxed{8}} & \PDAcell{\PDApending{2}} & \PDAcell{\star} & \PDAcell{\star} & \PDAcell{\PDApending{2}} & \PDAcell{\star} \\
\end{array}\right]$}
\end{minipage}
\par\vspace{1.2ex}
\begin{minipage}[t]{0.487\textwidth}
\centering
\parbox[t][2.6\baselineskip][t]{\linewidth}{\centering (c) Second round of tuple $d=1$: merge the two clusters}
\resizebox{\linewidth}{!}{$
\left[\begin{array}{r|ccc:cc}
f & \PDAcell{k_1^{(2)}} & \PDAcell{k_2^{(2)}} & \PDAcell{k_3^{(2)}} &
\PDAcell{k_1^{(1)}} & \PDAcell{k_2^{(1)}}\\[2pt]
\hline
1\vphantom{\boxed{8}} & \PDAcell{\star} & \PDAcell{\star} & \PDAcell{\PDApending{1}} & \PDAcell{\star} & \PDAcell{\PDApending{1}} \\
2\vphantom{\boxed{8}} & \PDAcell{\star} & \PDAcell{\star} & \PDAcell{\PDApending{1}} & \PDAcell{\PDAchanged{\langle2,2\rangle_{2}}} & \PDAcell{\star} \\
3\vphantom{\boxed{8}} & \PDAcell{\star} & \PDAcell{\PDAchanged{\langle2,2\rangle_{2}}} & \PDAcell{\star} & \PDAcell{\star} & \PDAcell{\PDApending{1}} \\
4\vphantom{\boxed{8}} & \PDAcell{\star} & \PDAcell{\PDApending{2}} & \PDAcell{\star} & \PDAcell{\PDApending{2}} & \PDAcell{\star} \\
5\vphantom{\boxed{8}} & \PDAcell{\PDApending{3}} & \PDAcell{\star} & \PDAcell{\star} & \PDAcell{\star} & \PDAcell{\PDApending{1}} \\
6\vphantom{\boxed{8}} & \PDAcell{\PDApending{3}} & \PDAcell{\star} & \PDAcell{\star} & \PDAcell{\PDApending{2}} & \PDAcell{\star} \\
7\vphantom{\boxed{8}} & \PDAcell{\star} & \PDAcell{\star} & \PDAcell{\PDApending{3}} & \PDAcell{\star} & \PDAcell{\PDApending{1}} \\
8\vphantom{\boxed{8}} & \PDAcell{\star} & \PDAcell{\star} & \PDAcell{\PDApending{3}} & \PDAcell{\langle2\rangle_{1}} & \PDAcell{\star} \\
9\vphantom{\boxed{8}} & \PDAcell{\star} & \PDAcell{\PDApending{1}} & \PDAcell{\star} & \PDAcell{\star} & \PDAcell{\PDApending{1}} \\
10\vphantom{\boxed{8}} & \PDAcell{\star} & \PDAcell{\PDApending{1}} & \PDAcell{\star} & \PDAcell{\PDApending{2}} & \PDAcell{\star} \\
11\vphantom{\boxed{8}} & \PDAcell{\PDAchanged{\langle2,2\rangle_{2}}} & \PDAcell{\star} & \PDAcell{\star} & \PDAcell{\star} & \PDAcell{\PDApending{1}} \\
12\vphantom{\boxed{8}} & \PDAcell{\PDApending{2}} & \PDAcell{\star} & \PDAcell{\star} & \PDAcell{\PDApending{2}} & \PDAcell{\star} \\
\end{array}\right]$}
\end{minipage}
\hfill
\begin{minipage}[t]{0.487\textwidth}
\centering
\parbox[t][2.6\baselineskip][t]{\linewidth}{\centering (d) Complete tuple $d=2$}
\resizebox{\linewidth}{!}{$
\left[\begin{array}{r|ccc:cc}
f & \PDAcell{k_1^{(2)}} & \PDAcell{k_2^{(2)}} & \PDAcell{k_3^{(2)}} &
\PDAcell{k_1^{(1)}} & \PDAcell{k_2^{(1)}}\\[2pt]
\hline
1\vphantom{\boxed{8}} & \PDAcell{\star} & \PDAcell{\star} & \PDAcell{\PDApending{1}} & \PDAcell{\star} & \PDAcell{\PDApending{1}} \\
2\vphantom{\boxed{8}} & \PDAcell{\star} & \PDAcell{\star} & \PDAcell{\PDApending{1}} & \PDAcell{\langle2,2\rangle_{2}} & \PDAcell{\star} \\
3\vphantom{\boxed{8}} & \PDAcell{\star} & \PDAcell{\langle2,2\rangle_{2}} & \PDAcell{\star} & \PDAcell{\star} & \PDAcell{\PDApending{1}} \\
4\vphantom{\boxed{8}} & \PDAcell{\star} & \PDAcell{\PDApending{2}} & \PDAcell{\star} & \PDAcell{\PDAchanged{\langle3,2\rangle_{4}}} & \PDAcell{\star} \\
5\vphantom{\boxed{8}} & \PDAcell{\PDAchanged{\langle3,2\rangle_{4}}} & \PDAcell{\star} & \PDAcell{\star} & \PDAcell{\star} & \PDAcell{\PDApending{1}} \\
6\vphantom{\boxed{8}} & \PDAcell{\PDApending{3}} & \PDAcell{\star} & \PDAcell{\star} & \PDAcell{\PDApending{2}} & \PDAcell{\star} \\
7\vphantom{\boxed{8}} & \PDAcell{\star} & \PDAcell{\star} & \PDAcell{\PDAchanged{\langle3,2\rangle_{4}}} & \PDAcell{\star} & \PDAcell{\PDApending{1}} \\
8\vphantom{\boxed{8}} & \PDAcell{\star} & \PDAcell{\star} & \PDAcell{\PDApending{3}} & \PDAcell{\langle2\rangle_{1}} & \PDAcell{\star} \\
9\vphantom{\boxed{8}} & \PDAcell{\star} & \PDAcell{\PDApending{1}} & \PDAcell{\star} & \PDAcell{\star} & \PDAcell{\PDApending{1}} \\
10\vphantom{\boxed{8}} & \PDAcell{\star} & \PDAcell{\PDApending{1}} & \PDAcell{\star} & \PDAcell{\PDAchanged{\langle2\rangle_{3}}} & \PDAcell{\star} \\
11\vphantom{\boxed{8}} & \PDAcell{\langle2,2\rangle_{2}} & \PDAcell{\star} & \PDAcell{\star} & \PDAcell{\star} & \PDAcell{\PDApending{1}} \\
12\vphantom{\boxed{8}} & \PDAcell{\PDApending{2}} & \PDAcell{\star} & \PDAcell{\star} & \PDAcell{\PDApending{2}} & \PDAcell{\star} \\
\end{array}\right]$}
\end{minipage}
\caption{Full-matrix evolution from the Cartesian product in \eqref{eq:illustrative_cart}.
Read panels (a)--(d) from left to right, then top to bottom; the sequence
continues in Fig.~\ref{fig:pda_steps_two}. Blue boxes identify the entries
replaced since the preceding panel; the boxes are visual annotations, not
part of a PDA symbol. Panel (a) shows the original integers in gray. From
panel (b) onward, a gray $\circ$ is the null marker at a non-star entry not
yet relabeled, while black labels record earlier replacements. Every panel
contains all $12$ batch rows and all five worker columns; $f$ is a row label,
not a worker column. Panels (b) and (c) show individual adjustment rounds,
whereas panel (d) shows all effective replacements of the next index tuple.}
\label{fig:pda_steps_one}
\end{figure*}
\begin{figure*}[t]
\centering
\scriptsize
\renewcommand{\PDApending}[1]{{\color{black!48}\circ}}
\setlength{\arraycolsep}{1pt}
\renewcommand{\arraystretch}{1.0}
\begin{minipage}[t]{0.487\textwidth}
\centering
\parbox[t][2.6\baselineskip][t]{\linewidth}{\centering (e) Complete tuple $d=3$}
\resizebox{\linewidth}{!}{$
\left[\begin{array}{r|ccc:cc}
f & \PDAcell{k_1^{(2)}} & \PDAcell{k_2^{(2)}} & \PDAcell{k_3^{(2)}} &
\PDAcell{k_1^{(1)}} & \PDAcell{k_2^{(1)}}\\[2pt]
\hline
1\vphantom{\boxed{8}} & \PDAcell{\star} & \PDAcell{\star} & \PDAcell{\PDApending{1}} & \PDAcell{\star} & \PDAcell{\PDApending{1}} \\
2\vphantom{\boxed{8}} & \PDAcell{\star} & \PDAcell{\star} & \PDAcell{\PDApending{1}} & \PDAcell{\langle2,2\rangle_{2}} & \PDAcell{\star} \\
3\vphantom{\boxed{8}} & \PDAcell{\star} & \PDAcell{\langle2,2\rangle_{2}} & \PDAcell{\star} & \PDAcell{\star} & \PDAcell{\PDApending{1}} \\
4\vphantom{\boxed{8}} & \PDAcell{\star} & \PDAcell{\PDApending{2}} & \PDAcell{\star} & \PDAcell{\langle3,2\rangle_{4}} & \PDAcell{\star} \\
5\vphantom{\boxed{8}} & \PDAcell{\langle3,2\rangle_{4}} & \PDAcell{\star} & \PDAcell{\star} & \PDAcell{\star} & \PDAcell{\PDApending{1}} \\
6\vphantom{\boxed{8}} & \PDAcell{\PDApending{3}} & \PDAcell{\star} & \PDAcell{\star} & \PDAcell{\PDAchanged{\langle2\rangle_{6}}} & \PDAcell{\star} \\
7\vphantom{\boxed{8}} & \PDAcell{\star} & \PDAcell{\star} & \PDAcell{\langle3,2\rangle_{4}} & \PDAcell{\star} & \PDAcell{\PDApending{1}} \\
8\vphantom{\boxed{8}} & \PDAcell{\star} & \PDAcell{\star} & \PDAcell{\PDApending{3}} & \PDAcell{\langle2\rangle_{1}} & \PDAcell{\star} \\
9\vphantom{\boxed{8}} & \PDAcell{\star} & \PDAcell{\PDApending{1}} & \PDAcell{\star} & \PDAcell{\star} & \PDAcell{\PDApending{1}} \\
10\vphantom{\boxed{8}} & \PDAcell{\star} & \PDAcell{\PDApending{1}} & \PDAcell{\star} & \PDAcell{\langle2\rangle_{3}} & \PDAcell{\star} \\
11\vphantom{\boxed{8}} & \PDAcell{\langle2,2\rangle_{2}} & \PDAcell{\star} & \PDAcell{\star} & \PDAcell{\star} & \PDAcell{\PDApending{1}} \\
12\vphantom{\boxed{8}} & \PDAcell{\PDApending{2}} & \PDAcell{\star} & \PDAcell{\star} & \PDAcell{\PDAchanged{\langle2\rangle_{5}}} & \PDAcell{\star} \\
\end{array}\right]$}
\end{minipage}
\hfill
\begin{minipage}[t]{0.487\textwidth}
\centering
\parbox[t][2.6\baselineskip][t]{\linewidth}{\centering (f) Complete tuple $d=4$}
\resizebox{\linewidth}{!}{$
\left[\begin{array}{r|ccc:cc}
f & \PDAcell{k_1^{(2)}} & \PDAcell{k_2^{(2)}} & \PDAcell{k_3^{(2)}} &
\PDAcell{k_1^{(1)}} & \PDAcell{k_2^{(1)}}\\[2pt]
\hline
1\vphantom{\boxed{8}} & \PDAcell{\star} & \PDAcell{\star} & \PDAcell{\PDApending{1}} & \PDAcell{\star} & \PDAcell{\PDApending{1}} \\
2\vphantom{\boxed{8}} & \PDAcell{\star} & \PDAcell{\star} & \PDAcell{\PDApending{1}} & \PDAcell{\langle2,2\rangle_{2}} & \PDAcell{\star} \\
3\vphantom{\boxed{8}} & \PDAcell{\star} & \PDAcell{\langle2,2\rangle_{2}} & \PDAcell{\star} & \PDAcell{\star} & \PDAcell{\PDApending{1}} \\
4\vphantom{\boxed{8}} & \PDAcell{\star} & \PDAcell{\PDAchanged{\langle2\rangle_{7}}} & \PDAcell{\star} & \PDAcell{\langle3,2\rangle_{4}} & \PDAcell{\star} \\
5\vphantom{\boxed{8}} & \PDAcell{\langle3,2\rangle_{4}} & \PDAcell{\star} & \PDAcell{\star} & \PDAcell{\star} & \PDAcell{\PDApending{1}} \\
6\vphantom{\boxed{8}} & \PDAcell{\PDApending{3}} & \PDAcell{\star} & \PDAcell{\star} & \PDAcell{\langle2\rangle_{6}} & \PDAcell{\star} \\
7\vphantom{\boxed{8}} & \PDAcell{\star} & \PDAcell{\star} & \PDAcell{\langle3,2\rangle_{4}} & \PDAcell{\star} & \PDAcell{\PDApending{1}} \\
8\vphantom{\boxed{8}} & \PDAcell{\star} & \PDAcell{\star} & \PDAcell{\PDApending{3}} & \PDAcell{\langle2\rangle_{1}} & \PDAcell{\star} \\
9\vphantom{\boxed{8}} & \PDAcell{\star} & \PDAcell{\PDApending{1}} & \PDAcell{\star} & \PDAcell{\star} & \PDAcell{\PDApending{1}} \\
10\vphantom{\boxed{8}} & \PDAcell{\star} & \PDAcell{\PDApending{1}} & \PDAcell{\star} & \PDAcell{\langle2\rangle_{3}} & \PDAcell{\star} \\
11\vphantom{\boxed{8}} & \PDAcell{\langle2,2\rangle_{2}} & \PDAcell{\star} & \PDAcell{\star} & \PDAcell{\star} & \PDAcell{\PDApending{1}} \\
12\vphantom{\boxed{8}} & \PDAcell{\PDAchanged{\langle2\rangle_{7}}} & \PDAcell{\star} & \PDAcell{\star} & \PDAcell{\langle2\rangle_{5}} & \PDAcell{\star} \\
\end{array}\right]$}
\end{minipage}
\par\vspace{1.2ex}
\begin{minipage}[t]{0.487\textwidth}
\centering
\parbox[t][2.6\baselineskip][t]{\linewidth}{\centering (g) Complete tuple $d=5$}
\resizebox{\linewidth}{!}{$
\left[\begin{array}{r|ccc:cc}
f & \PDAcell{k_1^{(2)}} & \PDAcell{k_2^{(2)}} & \PDAcell{k_3^{(2)}} &
\PDAcell{k_1^{(1)}} & \PDAcell{k_2^{(1)}}\\[2pt]
\hline
1\vphantom{\boxed{8}} & \PDAcell{\star} & \PDAcell{\star} & \PDAcell{\PDApending{1}} & \PDAcell{\star} & \PDAcell{\PDApending{1}} \\
2\vphantom{\boxed{8}} & \PDAcell{\star} & \PDAcell{\star} & \PDAcell{\PDApending{1}} & \PDAcell{\langle2,2\rangle_{2}} & \PDAcell{\star} \\
3\vphantom{\boxed{8}} & \PDAcell{\star} & \PDAcell{\langle2,2\rangle_{2}} & \PDAcell{\star} & \PDAcell{\star} & \PDAcell{\PDApending{1}} \\
4\vphantom{\boxed{8}} & \PDAcell{\star} & \PDAcell{\langle2\rangle_{7}} & \PDAcell{\star} & \PDAcell{\langle3,2\rangle_{4}} & \PDAcell{\star} \\
5\vphantom{\boxed{8}} & \PDAcell{\langle3,2\rangle_{4}} & \PDAcell{\star} & \PDAcell{\star} & \PDAcell{\star} & \PDAcell{\PDApending{1}} \\
6\vphantom{\boxed{8}} & \PDAcell{\PDAchanged{\langle3\rangle_{8}}} & \PDAcell{\star} & \PDAcell{\star} & \PDAcell{\langle2\rangle_{6}} & \PDAcell{\star} \\
7\vphantom{\boxed{8}} & \PDAcell{\star} & \PDAcell{\star} & \PDAcell{\langle3,2\rangle_{4}} & \PDAcell{\star} & \PDAcell{\PDApending{1}} \\
8\vphantom{\boxed{8}} & \PDAcell{\star} & \PDAcell{\star} & \PDAcell{\PDAchanged{\langle3\rangle_{8}}} & \PDAcell{\langle2\rangle_{1}} & \PDAcell{\star} \\
9\vphantom{\boxed{8}} & \PDAcell{\star} & \PDAcell{\PDApending{1}} & \PDAcell{\star} & \PDAcell{\star} & \PDAcell{\PDApending{1}} \\
10\vphantom{\boxed{8}} & \PDAcell{\star} & \PDAcell{\PDApending{1}} & \PDAcell{\star} & \PDAcell{\langle2\rangle_{3}} & \PDAcell{\star} \\
11\vphantom{\boxed{8}} & \PDAcell{\langle2,2\rangle_{2}} & \PDAcell{\star} & \PDAcell{\star} & \PDAcell{\star} & \PDAcell{\PDApending{1}} \\
12\vphantom{\boxed{8}} & \PDAcell{\langle2\rangle_{7}} & \PDAcell{\star} & \PDAcell{\star} & \PDAcell{\langle2\rangle_{5}} & \PDAcell{\star} \\
\end{array}\right]$}
\end{minipage}
\hfill
\begin{minipage}[t]{0.487\textwidth}
\centering
\parbox[t][2.6\baselineskip][t]{\linewidth}{\centering (h) Complete tuple $d=6$: final $P_{\textnormal{com}}^1$}
\resizebox{\linewidth}{!}{$
\left[\begin{array}{r|ccc:cc}
f & \PDAcell{k_1^{(2)}} & \PDAcell{k_2^{(2)}} & \PDAcell{k_3^{(2)}} &
\PDAcell{k_1^{(1)}} & \PDAcell{k_2^{(1)}}\\[2pt]
\hline
1\vphantom{\boxed{8}} & \PDAcell{\star} & \PDAcell{\star} & \PDAcell{\PDApending{1}} & \PDAcell{\star} & \PDAcell{\PDApending{1}} \\
2\vphantom{\boxed{8}} & \PDAcell{\star} & \PDAcell{\star} & \PDAcell{\PDApending{1}} & \PDAcell{\langle2,2\rangle_{2}} & \PDAcell{\star} \\
3\vphantom{\boxed{8}} & \PDAcell{\star} & \PDAcell{\langle2,2\rangle_{2}} & \PDAcell{\star} & \PDAcell{\star} & \PDAcell{\PDApending{1}} \\
4\vphantom{\boxed{8}} & \PDAcell{\star} & \PDAcell{\langle2\rangle_{7}} & \PDAcell{\star} & \PDAcell{\langle3,2\rangle_{4}} & \PDAcell{\star} \\
5\vphantom{\boxed{8}} & \PDAcell{\langle3,2\rangle_{4}} & \PDAcell{\star} & \PDAcell{\star} & \PDAcell{\star} & \PDAcell{\PDApending{1}} \\
6\vphantom{\boxed{8}} & \PDAcell{\langle3\rangle_{8}} & \PDAcell{\star} & \PDAcell{\star} & \PDAcell{\langle2\rangle_{6}} & \PDAcell{\star} \\
7\vphantom{\boxed{8}} & \PDAcell{\star} & \PDAcell{\star} & \PDAcell{\langle3,2\rangle_{4}} & \PDAcell{\star} & \PDAcell{\PDApending{1}} \\
8\vphantom{\boxed{8}} & \PDAcell{\star} & \PDAcell{\star} & \PDAcell{\langle3\rangle_{8}} & \PDAcell{\langle2\rangle_{1}} & \PDAcell{\star} \\
9\vphantom{\boxed{8}} & \PDAcell{\star} & \PDAcell{\PDApending{1}} & \PDAcell{\star} & \PDAcell{\star} & \PDAcell{\PDApending{1}} \\
10\vphantom{\boxed{8}} & \PDAcell{\star} & \PDAcell{\PDApending{1}} & \PDAcell{\star} & \PDAcell{\langle2\rangle_{3}} & \PDAcell{\star} \\
11\vphantom{\boxed{8}} & \PDAcell{\langle2,2\rangle_{2}} & \PDAcell{\star} & \PDAcell{\star} & \PDAcell{\star} & \PDAcell{\PDApending{1}} \\
12\vphantom{\boxed{8}} & \PDAcell{\langle2\rangle_{7}} & \PDAcell{\star} & \PDAcell{\star} & \PDAcell{\langle2\rangle_{5}} & \PDAcell{\star} \\
\end{array}\right]$}
\end{minipage}
\caption{Continuation of Fig.~\ref{fig:pda_steps_one}: complete matrices after
processing index tuples $d=3,4,5,6$. Panels (e)--(g) add the retained groups for one tuple at a time.
Tuple $d=6$ has no retained group, so panel (h) equals panel (g).
Candidate groups with no active destination are discarded before merging;
their positions remain $\circ$. The blue boxes mark the new entries. After
removing the visual boxes and row labels, the final panel is the episode-$1$
communication PDA $P_{\textnormal{com}}^1$ in \eqref{P_com}.}
\label{fig:pda_steps_two}
\end{figure*}
\endgroup

Consider first $\boldsymbol r=(r_2,r_1)=(1,1)$. In both basic PDAs,
symbol $2$ has row $1$ as its assigned all-star row, so
$\mathcal H_{2,1}=\mathcal H_{1,1}=\{2\}$.
In block $2$, symbol $2$ occurs at basic-PDA positions $(2,2)$ and
$(6,1)$. Fixing the other coordinate at $f_1=r_1=1$ places its group
at $(3,2)$ and $(11,1)$ in $P'$. In block $1$, symbol $2$ occurs
in basic row $2$. Its other coordinate can be either copy of row
$r_2=1$, namely $f_2=1$ or $4$. These choices give the two singleton
groups at $(2,4)$ and $(8,4)$, which we list in the order $(8,4)$,
$(2,4)$.

Both blocks retain their groups: the block-$2$ group includes active
column $1$, and both block-$1$ groups lie in active column $4$.
The list lengths are $(O_2^{(\boldsymbol r)},O_1^{(\boldsymbol r)})=(1,2)$.
The first round takes the first singleton from block $1$ and labels
$(8,4)$ by $\sigma_1=\langle2\rangle_1$, as shown in panel~(b). The remaining
lists both have length $1$, so the next round combines them. Panel~(c)
labels $(2,4)$, $(3,2)$, and $(11,1)$ by
$\sigma_2=\langle s_2,s_1\rangle_2=\langle2,2\rangle_2$.
The first component comes from the block-$2$ group at $(3,2)$ and
$(11,1)$, and the second from the block-$1$ group at $(2,4)$.
The cross stars follow directly from the fixed coordinates: row $2$
has stars in columns $1$ and $2$, while rows $3$ and $11$ have stars
in column $4$. Although column $2$ is inactive, its occurrence at
$(3,2)$ stays in the retained block-$2$ group. These positions give
the transmission $v_{1,2}\oplus v_{2,11}$ described in
Section~\ref{sec:illustrative}.

The panels process the six tuples in the order
$(1,1),(2,1),(3,1),(1,2),(2,2),(3,2)$, numbered $d=1,\ldots,6$.
For $d=2$, the same two-round pattern gives $\langle2\rangle_3$
and $\langle3,2\rangle_4$ in panel~(d). For $d=3$, the block-$2$
symbol has occurrence columns $2$ and $3$, both inactive. Its group
is discarded, leaving only the block-$1$ singletons at $(12,4)$ and
$(6,4)$, labeled $\langle2\rangle_5$ and $\langle2\rangle_6$ in
panel~(e). For $d=4,5$, the block-$1$ groups lie entirely in the
disconnected worker's column $5$ and are discarded. The retained
block-$2$ groups receive $\langle2\rangle_7$ and $\langle3\rangle_8$
in panels~(f) and~(g). Finally, $d=6$ has no retained group, so
panel~(h) introduces no new label. The final array has eight distinct delivery
symbols (fourteen non-null positions in total) and ten null entries, and it
equals $P_{\textnormal{com}}^1$ in \eqref{P_com}.

\subsection{Delivery-Aware Reduce Function Reassignment}
\label{sec:reassignment}
The Reduce-function assignment determines which non-star entries generate
transmissions, and it must therefore be fixed before the episode-wise
communication PDAs are instantiated. Each candidate is evaluated by counting
its adjustment rounds, which requires no coded-message generation.

The $|\mathcal D_{\rm ini}|$ abandoned Reduce functions are processed
sequentially. Disconnections in $\mathcal D_{\rm new}$ remove caches and
transmitters but abandon no function. For each candidate cluster $i\in[C]$ with a
connected worker, tentatively assign the next function to a worker with the
smallest current assignment, obtaining $\boldsymbol\gamma^{(i)}$. Ties
yield the same surrogate by Proposition~\ref{prop:balanced}. Its active
worker set in episode $u\in[U^{(i)}]$, where
$U^{(i)}=\max_{k\in[K]}\gamma_k^{(i)}$, is
\begin{IEEEeqnarray*}{rCl}
\mathcal A_u^{(i)}=\{k\in\bar{\mathcal D}:\gamma_k^{(i)}\ge u\}.
\end{IEEEeqnarray*}
By \eqref{number_elem}, the number of adjustment rounds for index tuple
$\boldsymbol r\in\mathcal R$ can be calculated directly as
\begin{IEEEeqnarray}{rCl}
O_{\max}^{(u,\boldsymbol r,i)}
&=&\max_{c\in[C]}V_c\,\mathbf1\{\mathcal A_u^{(i)}\cap \mathcal K_c^\star(r_c)\ne\emptyset\}.
\label{eq:candidate_rounds}
\end{IEEEeqnarray}
The surrogate is therefore
\begin{IEEEeqnarray*}{rCl}
 L_{\rm sim}^{(i)}
 =\sum_{u=1}^{U^{(i)}}\sum_{\boldsymbol r\in\mathcal R}
 O_{\max}^{(u,\boldsymbol r,i)}.
\end{IEEEeqnarray*}
This low-complexity surrogate counts the generated symbols, but it omits the
Case-3 multiplier $n_s/(n_s-1)$ in Section~\ref{sec:delivery}. The next Reduce
function is then assigned according to
\begin{IEEEeqnarray*}{rCl}
 i^*\in\operatorname{arg\,min}_{\substack{i\in[C]:\\
 \mathcal K_i\cap\bar{\mathcal D}\ne\emptyset}}L_{\rm sim}^{(i)},
\end{IEEEeqnarray*}
and the tentative assignment in cluster $i^*$ is committed.

\begin{proposition}
\label{prop:balanced}
Fix a candidate cluster $i\in[C]$ with
$\mathcal K_i\cap\bar{\mathcal D}\ne\emptyset$ and consider the surrogate $L_{\rm sim}^{(i)}$, which
counts the adjustment rounds and therefore omits the Case-3 multiplier
$n_s/(n_s-1)$ of Section~\ref{sec:delivery}. Assigning an abandoned Reduce
function to a worker with the minimum current
assignment does not increase $L_{\rm sim}^{(i)}$ compared with
assigning it to a worker with a larger assignment. Workers tied for the minimum
assignment give the same $L_{\rm sim}^{(i)}$.
\end{proposition}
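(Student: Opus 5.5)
The plan is to write the surrogate as a sum of set functions of the active sets. Then the effect of one tentative assignment becomes a single marginal increment, and the two candidate increments can be compared by a symmetry argument followed by a submodularity argument. For a set $A\subseteq\bar{\mathcal D}$ and a tuple $\boldsymbol r\in\mathcal R$, I define
\[
\Phi_{\boldsymbol r}(A)=\max_{c\in[C]}V_c\,\mathbf 1\{A\cap\mathcal K_c^\star(r_c)\ne\emptyset\},\qquad \Phi(A)=\sum_{\boldsymbol r\in\mathcal R}\Phi_{\boldsymbol r}(A).
\]
By \eqref{eq:candidate_rounds}, the surrogate is $\sum_{u}\Phi(\mathcal A_u)$, where $\mathcal A_u=\{k\in\bar{\mathcal D}:\gamma_k\ge u\}$. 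I let $u$ range over all positive integers and use $\Phi(\emptyset)=0$, so the value of $U$ plays no role. The sets $\mathcal A_u$ are nested and decreasing in $u$.

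Raising $\gamma_k$ by one, from the current assignment $\boldsymbol\gamma$, changes only $\mathcal A_{\gamma_k+1}$, which gains $k$. The increment of the surrogate is therefore
\[
\Delta(k)=\Phi(\mathcal A_{\gamma_k+1}\cup\{k\})-\Phi(\mathcal A_{\gamma_k+1}).
\]
Fix the candidate cluster $i$. Let $m$ be a connected worker of cluster $i$ with minimum assignment, and let $k$ be a connected worker of cluster $i$ with $\gamma_k\ge\gamma_m$. The claim reduces to $\Delta(m)\le\Delta(k)$, with equality when $\gamma_k=\gamma_m$.

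Write $B=\mathcal A_{\gamma_m+1}$ and $A'=\mathcal A_{\gamma_k+1}$. Since $\gamma_k+1\ge\gamma_m+1$, nestedness gives $A'\subseteq B$. Moreover neither $k$ nor $m$ lies in $A'$: $\gamma_k<\gamma_k+1$ excludes $k$, and $\gamma_m\le\gamma_k<\gamma_k+1$ excludes $m$.

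\textbf{Step 1 (symmetry).} Let $\pi$ be the transposition of the local indices of $k$ and $m$ inside cluster $i$. It induces a permutation of the $t_i$-subsets $\mathcal T^{(i)}_{r_i}$, and hence a bijection of $\mathcal R$ that changes only the coordinate $r_i$. Under this bijection, $\mathcal K_i^\star(r_i)$ is mapped to $\pi(\mathcal K_i^\star(r_i))$, and the blocks $c\ne i$ are untouched. The weights $V_c$ do not depend on $\boldsymbol r$, and $\pi$ fixes $A'$ pointwise because $k,m\notin A'$. Summing over $\boldsymbol r$ therefore gives
\[
\Phi(A'\cup\{k\})-\Phi(A')=\Phi(A'\cup\{m\})-\Phi(A').
\]
In the tie case $\gamma_k=\gamma_m$ we have $B=A'$, so this identity already gives $\Delta(k)=\Delta(m)$, which is the second assertion.

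\textbf{Step 2 (submodularity).} Each $\Phi_{\boldsymbol r}$ has the form $A\mapsto\max_{x\in A}w_{\boldsymbol r}(x)$, with $w_{\boldsymbol r}(x)=V_c$ if $x\in\mathcal K_c^\star(r_c)$ and $w_{\boldsymbol r}(x)=0$ otherwise (the blocks are disjoint, so $w_{\boldsymbol r}$ is well defined), and with $\max\emptyset=0$. A maximum of nonnegative weights is monotone and submodular. Indeed, adding $x$ increases it by $(w(x)-\max_A w)^+$, and this quantity is nonincreasing in $A$. Summing over $\boldsymbol r$, $\Phi$ is submodular. Since $A'\subseteq B$, we get
\[
\Delta(m)=\Phi(B\cup\{m\})-\Phi(B)\le\Phi(A'\cup\{m\})-\Phi(A').
\]
By Step 1 the right-hand side equals $\Delta(k)$, which proves the first assertion.

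\textbf{Expected obstacle.} The main difficulty is Step 1, since the two candidate workers are compared at different active sets. The set $B$ may contain $k$, which breaks the symmetry between $k$ and $m$. The argument therefore has to move first to the common smaller set $A'$, which contains neither worker, so that the swap $\pi$ is legitimate. I will also check that the bijection on $\mathcal R$ preserves the weights, which holds because $V_c$ depends only on cluster parameters. Finally, I will check that the episode count $U^{(i)}$ may grow by one without affecting the argument, which is handled by the convention $\Phi(\emptyset)=0$.
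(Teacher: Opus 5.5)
Your proof is correct and follows essentially the same route as the paper. Both write the surrogate as a sum, over the nested active sets, of a one-episode set function that is monotone, submodular and invariant under relabeling workers within cluster $i$, and then compare the two marginal increments by a label exchange together with submodularity; the only difference is that you apply submodularity first and swap at $A'=\mathcal A_{\gamma_k+1}$, which contains neither worker, whereas the paper swaps first at $\mathcal K_{\rm act}^{a+1}$ (which contains the heavier worker $q$, giving the base set $(\mathcal K_{\rm act}^{a+1}\setminus\{q\})\cup\{p\}$) and then invokes submodularity, so your ordering is marginally cleaner.
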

\begin{IEEEproof}
For an active-worker set $\mathcal A\subseteq\bar{\mathcal D}$, define the one-episode surrogate
\begin{IEEEeqnarray*}{rCl}
G(\mathcal A)=\sum_{\boldsymbol r\in\mathcal R}
\max_{c\in[C]}V_c\mathbf 1\{\mathcal A\cap \mathcal K_c^\star(r_c)\neq\emptyset\},
\end{IEEEeqnarray*}
where $V_c$ is the group count defined above. This expression follows directly from
\eqref{number_elem}, and $G(\emptyset)=0$.
The surrogate is the sum of $G$ over the episode-wise active sets.
In this proof, $\mathcal K_{\rm act}^u$ refers to the current assignment
before the new function is assigned; set
$\mathcal K_{\rm act}^u=\emptyset$ for every integer $u>\max_{k\in[K]}\gamma_k$.

For a fixed $\boldsymbol r\in\mathcal R$, adding a worker of cluster $c\in[C]$
incurs zero marginal cost when cluster $c$ is already represented or when that
worker lies outside $\mathcal K_c^\star(r_c)$; otherwise, its marginal cost equals the positive
part of $V_c$ minus the largest weight already represented. Since this marginal
cost cannot increase as the active set grows, $G$ is monotone and submodular.

Now fix distinct workers $p,q\in\mathcal K_i\cap\bar{\mathcal D}$ with
current assignments $a=\gamma_p$ and $b=\gamma_q$, where $0\le a<b$.
Because the basic PDA of cluster $i$ is indexed by
subsets of $[K_i]$, exchanging the labels of $p$ and $q$ merely permutes its
columns and the corresponding index tuples. The incidence structure, and hence
the value of $G$, is unchanged by this exchange. Assigning the new
function to $p$ affects only episode $a+1$, whose active set becomes
$\mathcal K_{\rm act}^{a+1}\cup\{p\}$, whereas assigning it to $q$ affects only episode $b+1$, whose
active set becomes $\mathcal K_{\rm act}^{b+1}\cup\{q\}$; if $b=\max_{k\in[K]}\gamma_k$, then episode $b+1$
is new and $\mathcal K_{\rm act}^{b+1}=\emptyset$. The two marginal costs are thus
$\Delta_p=G(\mathcal K_{\rm act}^{a+1}\cup\{p\})-G(\mathcal K_{\rm act}^{a+1})$ and
$\Delta_q=G(\mathcal K_{\rm act}^{b+1}\cup\{q\})-G(\mathcal K_{\rm act}^{b+1})$.
Since $\mathcal K_{\rm act}^{b+1}\subseteq \mathcal K_{\rm act}^{a+1}$, and since $q\in \mathcal K_{\rm act}^{a+1}$ while neither $p$ nor
$q$ belongs to $\mathcal K_{\rm act}^{b+1}$, exchanging their labels in the first
marginal-cost expression changes its base set to
$(\mathcal K_{\rm act}^{a+1}\setminus\{q\})\cup\{p\}$. This set contains
$\mathcal K_{\rm act}^{b+1}$ and does not contain $q$. Therefore, invariance under
the label exchange and submodularity give $\Delta_p\leq\Delta_q$.
Summing over the episodes in which the two active sets differ yields the same
inequality for $L_{\rm sim}^{(i)}$. If $a=b$, then neither $p$ nor $q$ belongs to
$\mathcal K_{\rm act}^{a+1}$; exchanging their labels therefore shows that the
two marginal costs coincide.
\end{IEEEproof}

This rule reduces the search from all joint assignments to $C$ candidate
evaluations per abandoned function. It remains a greedy heuristic evaluated on
the surrogate $L_{\rm sim}^{(i)}$, however, and claims no global optimality over
all joint reassignments.
\subsection{Delivery}
\label{sec:delivery}
With the Reduce-function assignment fixed, we now specify the transmissions
represented by $P_{\textnormal{com}}^u$. Fix an episode $u\in[U]$.

Delete from $P_{\textnormal{com}}^u$ every column belonging to a worker in
$\mathcal D$, discard the null entries, and retain the delivery symbols that
still occur. The same notation $P_{\textnormal{com}}^u$ is kept for the
resulting connected-column array. Throughout this subsection, each surviving
column is addressed by its original global worker index $k\in\bar{\mathcal D}$.
These labels are preserved when columns are removed.

For each retained symbol $s$, let $n_s\in[|\bar{\mathcal D}|]$ be its
number of occurrences and write their coordinates as
$(r_i,k_i)\in[\FPDA]\times\bar{\mathcal D}$, $i\in[n_s]$.
These coordinates and $n_s$ refer to the fixed symbol $s$ and episode $u$.
By the PDA
condition C3 in Definition~\ref{def:pda}, these rows and
columns are distinct. The indices of the requesting occurrences are
\begin{IEEEeqnarray}{rCl}
\mathcal J_s^u&=&\{i\in[n_s]:\gamma_{k_i}^u=1\}.
\label{eq:ns_definition}
\end{IEEEeqnarray}
Only symbols with $\mathcal J_s^u\ne\emptyset$ require delivery. The
cross-star condition gives
$P_{\textnormal{com}}^u[r_i,k_j]=\star$ for distinct $i,j\in[n_s]$.
Extracting the subarray at rows $\{r_i:i\in[n_s]\}$ and columns
$\{k_i:i\in[n_s]\}$ and permuting its rows appropriately yields
\begin{IEEEeqnarray*}{rCl}
    \left[\begin{array}{cccccc}
        k_1 & k_2 & \cdots & k_{n_s-1} & k_{n_s}  \\
        \hline
        \star & \star & \cdots & \star & s \\
        \star & \star & \cdots & s & \star \\
        \vdots & \vdots & \vdots & \vdots \\
        \star & s & \cdots & \star& \star \\
        s & \star & \cdots & \star & \star 
    \end{array}\right].
\end{IEEEeqnarray*}

Three cases arise, depending on the available transmitters and the current
Reduce-function assignment.

For each $i\in\mathcal J_s^u$, let $x_i$ denote the packet represented by the
non-star entry at $(r_i,k_i)$ for the function $\phi_{q_{k_i,u}}$ handled by
worker $k_i$ in this episode. This packet concatenates the IVs associated with
the $N/\FPDA$
files in batch $r_i$ and has length $NT/\FPDA$ bits. For
$(r,k)\in[\FPDA]\times\bar{\mathcal D}$, a star at $(r,k)$ means
that worker $k$ caches that batch and hence knows the corresponding packet for
every Reduce function. The cross-star property gives
$P_{\textnormal{com}}^u[r_i,k_j]=\star$ for distinct $i,j\in[n_s]$.
Thus, for each $i\in\mathcal J_s^u$, worker $k_j$ knows $x_i$ whenever
$j\in[n_s]\setminus\{i\}$.

Case 1: Suppose that a connected column $k^\star$ outside the occurrence columns of $s$ is a star in all occurrence rows of $s$, i.e., $P_{\textnormal{com}}^u[r_i,k^\star]=\star$ for every $i\in[n_s]$. Such a column exists, in particular, when the components merged into $s$ omit a cluster and at least one of the associated common-star columns remains connected. The corresponding subarray takes the form
\begin{IEEEeqnarray*}{rCl}
    \left[\begin{array}{cccccc}
        k_1 & k_2 & \cdots & k_{n_s-1} & k_{n_s} & k^\star \\
        \hline
        \star & \star & \cdots & \star & s & \star\\
        \star & \star & \cdots & s & \star & \star \\
        \vdots & \vdots & \vdots & \vdots & \vdots & \vdots\\
        \star & s & \cdots & \star & \star & \star\\
        s & \star & \cdots & \star & \star & \star 
    \end{array}\right].
\end{IEEEeqnarray*}
Worker $k^\star$ encodes
$\mathcal{X}=\bigoplus_{i\in\mathcal J_s^u}x_i$ and multicasts it to the
requesting workers. Worker $k_j$, $j\in\mathcal J_s^u$, subtracts from
$\mathcal X$ the known terms $x_i$, $i\in\mathcal J_s^u\setminus\{j\}$, and
recovers $x_j$. One transmission therefore serves all requesting destinations.
This round costs one packet, or $NT/\FPDA$ bits.

Case 2: Suppose that Case 1 does not apply, but some occurrence column
has no demand in episode $u$, so $\mathcal J_s^u\subsetneq[n_s]$.
Choose $j\in[n_s]\setminus\mathcal J_s^u$. By the PDA cross-star
condition, worker $k_j$ caches
the packet of every requesting worker. It can therefore transmit
$\mathcal X=\bigoplus_{i\in\mathcal J_s^u}x_i$, and each requesting
worker cancels the other terms using its cache. The sender need not hold a
Reduce function. As in Case 1, the cost is one packet, or $NT/\FPDA$ bits.

Case 3: Suppose that Case 1 does not apply and that every occurrence column
requests an IV in the current episode, i.e., $\gamma_{k_i}^u>0$ for all
$i\in[n_s]$. Then no available column is a star in all occurrence rows, and all
$n_s$ occurrence columns are destinations. The placement-feasibility assumption
implies $n_s\geq2$: if $n_s=1$, a connected worker caching the missing batch
would satisfy Case~1. For each $j\in[n_s]$, split $x_j$ into $n_s-1$ equal-length subpackets
$\{x_j^{(i)}:i\in[n_s]\setminus\{j\}\}$. For every $i\in[n_s]$, worker $k_i$
encodes and multicasts
\begin{IEEEeqnarray*}{rCl}
 \mathcal{X}_i=\bigoplus_{j\in[n_s],j\neq i}x_j^{(i)}
\end{IEEEeqnarray*}
to the workers in $\{k_j:j\in[n_s],j\neq i\}$. Consider destination $k_j$, $j\in[n_s]$.
For each $i\in[n_s]\setminus\{j\}$, it receives $\mathcal{X}_i$ and
knows every packet $x_\ell^{(i)}$ with $\ell\in[n_s]\setminus\{i,j\}$ from the cross-star
entries, so it recovers $x_j^{(i)}$. Repeating this for all $i\ne j$ recovers all
$n_s-1$ subpackets of $x_j$. If necessary, use a symbol extension so that the split is integral. Since each of the $n_s$ messages carries $1/(n_s-1)$ of an
IV's length, this round costs $n_s/(n_s-1)$ packets, or
$n_sNT/((n_s-1)\FPDA)$ bits.
\subsubsection*{Completeness of delivery.}
Every IV missing at an active worker corresponds to a non-star entry in the
episode-$u$ connected-column communication PDA, and every such symbol is
processed in every episode. Either an external transmitter satisfying the
condition of Case 1 exists, or it does not. In the latter situation, either at
least one occurrence column has no Reduce demand in the current episode
(Case 2), or all occurrence columns have a demand (Case 3). These cases are
exhaustive. The arguments above show that every destination recovers the IV
represented by its non-star entry, while the IVs at star entries are already
local. Hence, once all symbols and all episodes have been processed, every
active worker possesses all $N$ IVs for each Reduce function assigned to it and
can compute its required output.
\section{Main Results}
\label{sec:results}
The number of batches $\FPDA$ was defined in Section~\ref{sec:placement},
and $K_0=Q$ is the initial worker count. Recall the two replication totals:
$t_0=\sum_{c\in\mathcal C_{\rm ini}}t_c$ for the initial clusters and
$\tau=\sum_{c\in[C]}t_c$ for all clusters. To express the effect of storage,
write
\begin{IEEEeqnarray*}{rCl}
\varrho_c&=&\frac{N-M_c}{M_c}=\frac{K_c-t_c}{t_c},\qquad c\in[C],\\
\varrho_{\max}&=&\max_{c\in\mathcal C_{\rm ini}}\varrho_c,\qquad
\varrho_{\min}=\min_{c\in\mathcal C_{\rm ini}}\varrho_c.
\end{IEEEeqnarray*}
Thus $\varrho_c$ is the ratio of uncached to cached files per worker;
its two extrema refer only to the initial clusters.

\begin{theorem}[Communication load]
\label{thm:load}
Consider a system configuration
$(\mathcal C_{\rm ini},\mathcal C_{\rm new},\mathcal{D})$, let
$\boldsymbol E$ be the connectivity vector of $\mathcal{D}$, i.e.,
$E_k=\mathbf 1\{k\notin\mathcal{D}\}$ for $k\in[K]$, let $\mathcal{M}$ be the
Cartesian-product placement of Section~\ref{sec:placement}, and let
$\boldsymbol{\gamma}$ be the Reduce function assignment produced by the
delivery-aware rule of Section~\ref{sec:reassignment} for this configuration and
placement. Assume that the disconnection set satisfies the feasibility
condition of Proposition~\ref{prop:feasibility}. Then the proposed dynamic CDC
scheme attains the communication load
\begin{IEEEeqnarray}{rCl}
L_{\rm dyn}&=&\sum_{u=1}^{U}
\frac{|\mathcal S_1^u|+|\mathcal S_2^u|
+\sum_{s\in\mathcal S_3^u}\frac{n_s}{n_s-1}}
{K_0\FPDA},
\end{IEEEeqnarray}
where $U=\max_{k\in[K]}\gamma_k$ and $\gamma_k^u=\mathbf 1\{u\le\gamma_k\}$ for
$k\in[K]$ and $u\in[U]$, and where, in episode $u$, the set $\mathcal S_j^u$ contains the
symbols requiring delivery by Case $j$, $j\in\{1,2,3\}$, and $n_s$
is the number of connected occurrences of $s$ in that episode. Case 3 always
has $n_s\geq2$.
\end{theorem}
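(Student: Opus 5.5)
The load formula follows by bit accounting over the transmissions of Section~\ref{sec:delivery}, episode by episode and symbol by symbol. Normalize by $NQT=NK_0T$, since $Q=K_0$. The steps are: establish correctness, count the bits each retained symbol costs, and sum.

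\emph{Correctness.} Fix an episode $u\in[U]$. By Proposition~\ref{prop:communication_pda}, every noncached batch requested by an active worker in episode $u$ corresponds to exactly one non-null symbol of $P_{\textnormal{com}}^u$. Every such symbol satisfies the cross-star condition C3, and C3 survives deleting the columns in $\mathcal D$. Feasibility in the sense of Proposition~\ref{prop:feasibility} guarantees a connected copy of every batch. The completeness paragraph of Section~\ref{sec:delivery} shows that Cases~1--3 are exhaustive and that every destination decodes. Hence the transmissions are a valid Shuffle, and the load equals their total length divided by $NK_0T$. Because each worker handles at most one function per episode, the episode-$u$ transmissions serve exactly the pairs $(k,q_{k,u})$ with $\gamma_k^u=1$. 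Ranging over $u\in[U]$ covers every pair $(k,q)$ with $q\in\mathcal Q_k$, so no demand is counted twice or omitted.

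\emph{Per-symbol cost.} Partition the retained symbols with $\mathcal J_s^u\ne\emptyset$ into $\mathcal S_1^u,\mathcal S_2^u,\mathcal S_3^u$ according to the first applicable case. Symbols with $\mathcal J_s^u=\emptyset$, including all $\circ$ entries, send nothing.
\begin{itemize}
\item A symbol in $\mathcal S_1^u$ or $\mathcal S_2^u$ produces one XOR of packets of $NT/\FPDA$ bits each.
\item A symbol in $\mathcal S_3^u$ produces $n_s$ messages, each of length $NT/((n_s-1)\FPDA)$.
\end{itemize}
The case analysis already shows $n_s\ge2$ for Case~3. The one point needing care is to repeat that argument here. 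If $n_s=1$, the feasibility condition gives some connected worker caching the batch $r_1$. That worker is a star in the only occurrence row, so it satisfies the Case~1 condition, a contradiction. Therefore the factor $n_s/(n_s-1)$ is finite. Symbol extension is needed only for integrality and does not change the normalized count.

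\emph{Summation.} Summing the per-symbol costs over $s$ and $u$ gives a total of
\[
\frac{NT}{\FPDA}\sum_{u}\Bigl(|\mathcal S_1^u|+|\mathcal S_2^u|+\sum_{s\in\mathcal S_3^u}\tfrac{n_s}{n_s-1}\Bigr)
\]
bits. Dividing by $NK_0T$ gives the claimed $L_{\rm dyn}$.

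The main obstacle is bookkeeping rather than depth: showing that the case classification is well defined, and that no batch is sent twice within an episode. Both follow from the \emph{exactly one} clause of Proposition~\ref{prop:communication_pda} and from assigning each symbol to the first applicable case.
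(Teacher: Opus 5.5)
Your proposal is correct and follows the paper's proof. Both count one packet of $NT/\FPDA$ bits for each symbol delivered by Case~1 or Case~2 and $n_s/(n_s-1)$ packets for each Case-3 symbol, sum over symbols and episodes, and normalize by $NK_0T$. Your extra validity check (via Proposition~\ref{prop:communication_pda} and the completeness paragraph) and your feasibility argument for $n_s\ge2$ only make explicit what the paper already establishes in Section~\ref{sec:delivery} rather than in the proof itself.
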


\begin{IEEEproof}
By the delivery rules above, a single multicast of one packet delivers every
symbol in $\mathcal S_1^u\cup\mathcal S_2^u$, whereas every symbol in
$\mathcal S_3^u$ costs $n_s/(n_s-1)$ packets. Summing these costs over the symbols processed in episode
$u$ and over all episodes $u\in[U]$, then normalizing by the number of batches $\FPDA$ and
by the $K_0$ Reduce functions as in the definition of the communication load,
yields the stated expression.
\end{IEEEproof}

Theorem~\ref{thm:load} captures arbitrary departures through the connected-column
restriction and heterogeneous Reduce assignments through the episode
decomposition. Although it is stated for the assignment produced by our rule,
the expression depends on the assignment only through $\boldsymbol{\gamma}$ and
the induced sets $\mathcal S_j^u$, so it evaluates the load of any feasible
assignment as well. The corollary below specializes this expression to the case
without disconnections, where newly joined storage and the imbalance between
clusters play an explicit role.

\begin{samepage}
\begin{corollary}[No-disconnection load]
\label{pro2}
Assume that no worker disconnects. If a new cluster joins, or if $\tau=1$
(necessarily $C=C_{\rm ini}=1$ and $t_1=1$), then
\begin{IEEEeqnarray}{rCl}
L_{\rm dyn}&=&\frac{\varrho_{\max}}{K_0}.
\label{load_new}
\end{IEEEeqnarray}
If $C_{\rm new}=0$ and $\tau>1$, then
\begin{IEEEeqnarray}{rCl}
\frac{\varrho_{\max}}{K_0}
\leq L_{\rm dyn}\leq
\frac{1}{K_0}\left(\varrho_{\max}+
\frac{\varrho_{\min}}{\tau-1}\right).
\label{load_no_new}
\end{IEEEeqnarray}
\end{corollary}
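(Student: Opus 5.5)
The plan is to specialize Theorem~\ref{thm:load} to $\mathcal D=\emptyset$ by counting the delivery symbols and the case each one falls into. With no disconnection, no function is abandoned, so $\gamma_k=1$ for $k\in[K_0]$ and $\gamma_k=0$ otherwise. Hence $U=1$, and the active set in the single episode is $\mathcal K_{\rm act}^1=[K_0]$.

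First I count the rounds of the construction in Section~\ref{sec:comPDA}. Fix $\boldsymbol r\in\mathcal R$.
\begin{itemize}
\item For an initial cluster $c$, the set $\mathcal K_c^\star(r_c)$ has $t_c\ge1$ elements, all of which are active. So $O_c^{(\boldsymbol r)}=V_c$ by \eqref{number_elem}.
\item For a new cluster, no column is active, so $O_c^{(\boldsymbol r)}=0$.
\end{itemize}
Therefore
\[
O_{\max}^{(\boldsymbol r)}=\max_{c\in\mathcal C_{\rm ini}}(K_c-t_c)\prod_{c'\ne c}t_{c'}=\varrho_{\max}\prod_{c'\in[C]}t_{c'},
\]
and this value does not depend on $\boldsymbol r$. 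Multiplying by $|\mathcal R|=\prod_c B_c$ shows that the total number of delivery symbols is $\FPDA\varrho_{\max}$. Every symbol costs at least one packet, and the normalization in Theorem~\ref{thm:load} is $K_0\FPDA$. This gives the lower bound $\varrho_{\max}/K_0$ in both \eqref{load_new} and \eqref{load_no_new}. Equality holds exactly when no symbol falls into Case~3.

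Next I show that Case~3 never occurs when a new cluster joins or when $\tau=1$.
\begin{itemize}
\item \emph{A new cluster $c'$ joins.} Every group is taken from an initial block, so every symbol omits block $c'$. By \eqref{eq:compatible_group_coordinates}, every occurrence row of the symbol has $f_{c'}=r_{c'}+h_{c'}B_{c'}$. Each such row therefore has stars in all columns of $\mathcal K_{c'}^\star(r_{c'})$. These columns are connected and lie outside the occurrence columns, so Case~1 applies.
\item \emph{$\tau=1$.} Here $C=1$ and $t_1=1$, so each symbol has $n_s=1$. The unique star column of its row is a connected column outside the occurrence column, so Case~1 applies again.
\end{itemize}
In both situations $L_{\rm dyn}=\varrho_{\max}/K_0$, which is \eqref{load_new}.

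For the upper bound in \eqref{load_no_new}, take $C_{\rm new}=0$ and $\tau>1$. The argument above still shows that any symbol omitting some block is Case~1, since its omitted block's star columns are connected. So only symbols merging a group from every cluster can fall into Case~3. Such a symbol has $n_s=\sum_c t_c=\tau$, and its extra cost over one packet is $\tau/(\tau-1)-1=1/(\tau-1)$.

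It remains to bound the number of full rounds per $\boldsymbol r$ by $\min_c V_c=\varrho_{\min}\prod_{c'}t_{c'}$.
\begin{itemize}
\item If $C=1$, every round takes the single list, so there are exactly $V_1$ full rounds.
\item If $C\ge2$, the selection rule takes all clusters only when all remaining lengths are equal and positive. Before that moment, each round takes $C-1$ of the largest lists. A full round at common length $m$ requires every list, including the shortest, to still have length $m$. The shortest list length is never increased, so $m\le\min_c V_c$. After equalization at level $m$, there are exactly $m$ full rounds.
\end{itemize}
Summing over $\boldsymbol r$, the total extra cost is at most
\[
\frac{|\mathcal R|\,\varrho_{\min}\prod_{c'}t_{c'}}{\tau-1}=\frac{\FPDA\,\varrho_{\min}}{\tau-1},
\]
and dividing by $K_0\FPDA$ gives \eqref{load_no_new}.

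I expect the main obstacle to be this last counting step, specifically showing that the greedy reduction reaches a common level no higher than $\min_c V_c$. A tie might momentarily exclude the minimal list, and I would argue via a potential that the minimum remaining length never increases. A secondary point needing care is justifying that every symbol omitting a block is Case~1, which relies on \eqref{eq:compatible_group_coordinates} and on the absence of disconnections.
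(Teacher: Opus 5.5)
Your proof is correct and follows essentially the same route as the paper. Both use a single episode with $\varrho_{\max}\prod_{c}t_c$ rounds per tuple, place every round that omits a cluster in Case~1 (in particular every round once a new cluster joins, and every singleton round when $\tau=1$), and confine Case~3 to rounds that select all clusters, each costing $\tau/(\tau-1)$. The step you flagged as the main obstacle is easier than you expect, and the paper settles it in one line: cluster $c$'s list is decremented in exactly $V_c$ rounds and every full round decrements all lists, so there are at most $\min_c V_c$ full rounds per tuple, with no equalization level or potential argument needed.
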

\end{samepage}

\begin{IEEEproof}
There is a single episode, $u=1$, with
$\mathcal C_{\rm act}^1=\mathcal C_{\rm ini}$.
Each packet carries $NT/\FPDA$ bits. Recall that each active initial cluster $c\in\mathcal C_{\rm ini}$ contributes
$V_c=\varrho_c\prod_{j\in[C]}t_j$ component groups to
each index tuple. Thus the total number of rounds per tuple is
$\varrho_{\max}\prod_{j\in[C]}t_j$. When $\tau=1$, every symbol has
multiplicity $n_s=1$ and is therefore sent uncoded at a cost of one packet, so
all $\varrho_{\max}\prod_{j\in[C]}t_j$ rounds for every $\boldsymbol r\in\mathcal R$ use one packet and
\eqref{load_new} follows.

Two structural facts about the adjustment of Section~\ref{sec:comPDA} are
used below. Fix $\boldsymbol r\in\mathcal R$. Every occurrence row
contributed by cluster $i\in[C]$ has its $j$-th
block equal to a copy of basic row $r_j$ for every $j\in[C]\setminus\{i\}$. These copies
come from one MN-PDA row, whose stars are exactly the columns of its
$t_j$-subset. Since the basic-PDA construction relabels only non-star
entries, every copy is $\star$ in exactly the columns $\mathcal K_j^\star(r_j)$ defined
above. Every selected component from cluster $j$ has $r_j$ as an all-star
row and occurs in $t_j$ distinct columns. Because $|\mathcal K_j^\star(r_j)|=t_j$, its
occurrence-column set is exactly $\mathcal K_j^\star(r_j)$.

\emph{(i) An omitted cluster yields a Case-1 round.} Let a round merge the
components labelled by $\mathcal N\subseteq[C]$ into a symbol $s$.
If some cluster $j_0\in[C]\setminus\mathcal N$ is omitted, then the
$j_0$-th block of every occurrence row is a copy of $r_{j_0}$ and is hence
$\star$ in every column of $\mathcal K_{j_0}^\star(r_{j_0})$. Every
$k^\star\in\mathcal K_{j_0}^\star(r_{j_0})$ is therefore $\star$ in all occurrence rows of $s$.
It lies outside the occurrence columns of $s$, because it belongs to the omitted
cluster $j_0$, and it is connected in the setting of this corollary, where no
worker disconnects and a transmitter needs no Reduce function. Case~1 therefore
applies, and the round costs one packet.

\emph{(ii) Only balanced rounds can fail to be Case-1 rounds.} Write
$\mathcal N=\mathcal C_{\rm act}^u$ for the case in which the merge rule
selects all active clusters; by that rule all their remaining counters must
be positive and equal,
and in
every other round at least one cluster of $\mathcal C_{\rm act}^u$ is
omitted, so (i) makes that round a Case-1 round. Call a round with
$\mathcal N=\mathcal C_{\rm act}^u$ balanced; in a balanced round the merge rule
decrements every cluster of $\mathcal C_{\rm act}^u$ once. The counter of
cluster $i\in\mathcal C_{\rm act}^u$ starts at $V_i$ and decreases by one exactly in the rounds with
$i\in\mathcal N$, and because the adjustment terminates only once every list is
exhausted, cluster $i$ is selected in exactly $V_i$ rounds. Hence at most
$\min_{i\in\mathcal C_{\rm act}^u}V_i$ rounds are balanced.

For $C_{\rm new}\geq1$, no newly joined worker carries a Reduce function, so
$\mathcal C_{\rm act}^u\subseteq \mathcal C_{\rm ini}$ and every newly joined cluster
is omitted in every round. By (i), every round is therefore a Case-1 round, and
all $\varrho_{\max}\prod_{j\in[C]}t_j$ rounds for every tuple use one packet. Multiplying by the
$|\mathcal R|=\prod_{c\in[C]}\binom{K_c}{t_c}$ values of $\boldsymbol r$ and
normalizing by $NK_0T$ then gives \eqref{load_new}.

For $C_{\rm new}=0$ and $\tau>1$, every worker has one function, so $U=1$ and Case 2
cannot occur. The total number of rounds over all tuples is
$\varrho_{\max}\FPDA$. Case-3 rounds are among the balanced rounds in (ii),
so there are at most $\varrho_{\min}\FPDA$ of them. Each costs
$\tau/(\tau-1)$ packets instead of one. Hence the exact load is
\begin{IEEEeqnarray}{rCl}
L_{\rm dyn}&=&\frac{\varrho_{\max}}{K_0}
+\frac{|\mathcal S_3^1|}{K_0\FPDA(\tau-1)},
\label{load_no_new_exact}
\end{IEEEeqnarray}
and $0\leq|\mathcal S_3^1|\leq\varrho_{\min}\FPDA$ gives
\eqref{load_no_new}.
\end{IEEEproof}
Equation~\eqref{load_no_new_exact} makes the cost of Case-3 rounds explicit.
When $C=C_{\rm ini}=1$, every basic-PDA symbol is derived from an MN symbol
indexed by a $(t_1+1)$-subset. Its $t_1$ occurrence columns omit one worker
of that subset, which caches every occurrence row. Every round is therefore
Case 1, and the load equals $\varrho_{\max}/K_0$, meeting
Lemma~\ref{lem:lower}.

When $C=C_{\rm ini}\geq2$ and $\varrho_{\max}=\varrho_{\min}$, each round
merges one component from every cluster. A common-star column in cluster
$j$ would have to belong to $\mathcal K_j^\star(r_j)$, since every other cluster contributes a
copy of row $r_j$. But $\mathcal K_j^\star(r_j)$ is exactly the occurrence-column set of
cluster $j$'s own component, and each such column has a non-star entry in
one occurrence row. No common-star column exists, so every round is Case 3
and the upper endpoint of \eqref{load_no_new} is attained.

\begin{lemma}[CDC converse bound]
\label{lem:lower}
For $s_0\in[K]$ and $d\in[K-s_0]$, let $a_{s_0,d}$ denote the number of IVs exclusively required by, but
unavailable at, $d$ workers and exclusively stored by $s_0$ workers. Every
MapReduce-based CDC scheme satisfies the converse bound in
\cite{li2017fundamental,krishnan2021umbrella}
\begin{IEEEeqnarray}{rCl}
	L&\geq&\frac{1}{QN}\sum_{s_0=1}^K\sum_{d=1}^{K-s_0}a_{s_0,d}\frac{d}{s_0+d-1}.\label{lower_bound}
\end{IEEEeqnarray}
where each IV has $T$ bits and the load is the one of
Definition~\ref{def:load}, normalized by $NQT$.
\end{lemma}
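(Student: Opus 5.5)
The plan is to reduce the bound to the converse of \cite{li2017fundamental,krishnan2021umbrella}, which is proved by induction on the number of workers through a cut-set argument over chains of worker subsets. The cited bound is stated for a static system in which every worker holds a function, so the first step is to reinterpret our dynamic configuration as such a static instance. We fix the configuration $(\mathcal C_{\rm ini},\mathcal C_{\rm new},\mathcal D)$, the placement $\mathcal M$, and the assignment $\{\mathcal Q_k\}$. Disconnected workers are discarded, since they transmit nothing and need nothing. The remaining workers form a set of at most $K$ nodes with caches $\mathcal M_i^{(c)}$ and demand sets $\mathcal Q_k$, some of which may be empty. An empty demand set only removes requirements, so the static argument still applies. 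Every IV $v_{q,n}$ is stored by some set $\mathcal S$ of connected workers, with $|\mathcal S|=s_0\ge1$ by feasibility. It is required by the single worker $k$ holding $q$, and $k$ must not lie in $\mathcal S$. The quantities $a_{s_0,d}$ then count IVs grouped by storage multiplicity and demand multiplicity, exactly as in the cited definition.

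The core step is the standard chain argument. For any ordering $k_1,\ldots,k_m$ of the connected workers, the messages together with the caches of $k_1,\ldots,k_j$ determine everything $k_1,\ldots,k_j$ need. By the chain rule, the total transmission $H(\mathcal X)$ is lower-bounded by the sum over $j$ of the entropy of the IVs needed by $k_j$ that are not known to $k_1,\ldots,k_{j-1}$ and not cached by $k_j$. Here I will invoke independence and uniformity of the IVs, as stated in the model. I will then average this inequality over all $m!$ orderings. An IV stored by $s_0$ workers and demanded by $d$ workers outside its storage set is counted in ordering $\pi$ when some demanding worker precedes all $s_0$ storers. The counting that converts this into weight $d/(s_0+d-1)$ is the delicate part. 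The original \cite{li2017fundamental} argument yields a weight of $1/s_0$ per demand under symmetric placement. The refined weight $d/(s_0+d-1)$ is the umbrella bound of \cite{krishnan2021umbrella}, obtained by an induction in which one removes a worker and uses the genie-aided bound for the remaining system. For an arbitrary placement I would follow their induction directly. Doing so requires checking that their proof never uses the equal-size or one-function-per-worker structure.

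The main obstacle is this verification. Our system allows workers with several functions, where $\gamma_k>1$, and workers with none. To handle this, I would replace each worker holding $\gamma_k$ functions by a single node whose demand is the union of its functions. The cited induction is node-based and treats demand sets abstractly, so this substitution should go through. A node with no demand contributes only as a storer and never as a destination. Finally, I will divide the entropy bound by $NQT$ as in Definition~\ref{def:load}. Each IV has entropy $T$, and $\sum_{s_0,d}a_{s_0,d}$ is counted per IV, so the prefactor $1/(QN)$ appears and gives \eqref{lower_bound}.
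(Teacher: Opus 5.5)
Your proposal is correct and follows the paper, which gives no separate argument for Lemma~\ref{lem:lower} and simply imports the data-exchange converse of \cite{li2017fundamental,krishnan2021umbrella}; your checks are exactly what justify that import: dropping disconnected workers, merging a worker's $\gamma_k$ functions into one demand set, letting function-less workers act only as storers and transmitters, and using the independence of the IVs. One caveat: the ordering average you sketch gives only the weaker weight $d/(s_0+d)$, and the improvement to $d/(s_0+d-1)$ comes from each worker's message being a function of its own cache, which is the device-to-device feature used in the cited induction you rightly rely on.
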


This bound accounts only for the number of workers that store each IV. The next
lemma exploits the additional structure of a coded multicast and is the key to
the constant factor obtained below. Let $\mathfrak S$ denote the class of
instantly decodable shuffle schemes in which every message is an XOR of IV
pieces and each destination decodes one missing piece by canceling all other
summands with its cache alone; both the proposed construction and standard
PDA-based delivery fall within this class, whereas the general decoding
functions of Section~\ref{sec:model}, which may combine all received messages, lie outside
it. Let $\mathcal P_{\rm unc}$ denote the class of all feasible uncoded
placements that satisfy the per-worker cache constraints of
Section~\ref{sec:model}. For a placement $\pi\in\mathcal P_{\rm unc}$,
let $L^*_{\mathfrak S}(\pi)$ denote the minimum load
achieved by a scheme of this class under $\pi$, with the same connectivity and
the same Reduce-function assignment as the proposed scheme. The Cartesian-product placement is denoted by $\pi_{\rm cart}$. Unlike
Lemma~\ref{lem:lower}, the refined bound below charges each missing IV according
to the replication of its own file. This file-wise accounting yields a converse
that is uniform over the entire class $\mathcal P_{\rm unc}$, including highly
unbalanced placements.

\begin{lemma}[Multicast-destination bound]
\label{lem:multicast}
Let $\pi\in\mathcal P_{\rm unc}$ be an arbitrary uncoded placement of
the $N$ files, represented by
$\{\mathcal M_i^{(c)}:c\in[C],\ i\in[K_c]\}$, with $|\mathcal M_i^{(c)}|=M_c$ for every
$c\in[C]$ and $i\in[K_c]$.
Let $\boldsymbol E\in\{0,1\}^K$ be an arbitrary connectivity
vector, and let the $K_0$ Reduce functions be assigned to the connected workers,
so that every IV is requested by exactly one worker; consider a scheme of the
class $\mathfrak S$. Every disconnected worker has $\gamma_k=0$; a connected
worker with $\gamma_k=0$ may still transmit using its cached IVs. For every
cluster $c\in[C]$ and file $n\in[N]$, define
\begin{IEEEeqnarray*}{rCl}
r_{c,n}&\triangleq&\bigl|\{i\in[K_c]:k_i^{(c)}\in\bar{\mathcal D},\
n\in\mathcal M_i^{(c)}\}\bigr|,\\
d_{c,n}&\triangleq&\sum_{\substack{i\in[K_c]:\ k_i^{(c)}\in\bar{\mathcal D},\\n\notin\mathcal M_i^{(c)}}}\gamma_{k_i^{(c)}}.
\end{IEEEeqnarray*}
Thus, $r_{c,n}$ is the surviving in-cluster replication of file $n$, while $d_{c,n}$ is
the number of IVs associated with file $n$ that are requested but unavailable
inside the caches of their requesting workers in cluster $c$. Then every such
scheme satisfies
\begin{IEEEeqnarray}{rCl}
L^*_{\mathfrak S}(\pi)
&\geq&\frac{1}{NK_0}\max_{c\in[C]}
\sum_{n=1}^{N}\frac{d_{c,n}}{r_{c,n}+1}.
\label{eq:placement_lb}
\end{IEEEeqnarray}
If no worker disconnects, every initial worker is responsible for one distinct
output function and every newly joined worker carries none. Consequently, for
every $c\in\mathcal C_{\rm ini}$, $d_{c,n}=K_c-r_{c,n}$ and
$\sum_{n=1}^{N}r_{c,n}=K_cM_c=t_cN$, so \eqref{eq:placement_lb} further gives,
for every $\pi\in\mathcal P_{\rm unc}$,
\begin{IEEEeqnarray}{rCl}
L^*_{\mathfrak S}(\pi)
&\geq&\frac{1}{K_0}\max_{c\in\mathcal C_{\rm ini}}
\frac{K_c-t_c}{t_c+1}.
\label{eq:placement_free_lb}
\end{IEEEeqnarray}
In particular, \eqref{eq:placement_free_lb} also lower-bounds
$\min_{\pi\in\mathcal P_{\rm unc}}L^*_{\mathfrak S}(\pi)$.
\end{lemma}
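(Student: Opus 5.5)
The plan is a per-cluster charging argument. Fix a cluster $c$ and a scheme in $\mathfrak S$. Each message is an XOR of IV pieces, and every destination cancels all summands except its own using its cache alone. Consider one message $X$ that delivers useful pieces to destinations in cluster $c$. Let those pieces be $x_1,\ldots,x_m$, with $x_j$ a piece of an IV of file $n_j$ requested by worker $k_j\in\mathcal K_c$.

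\textbf{Step 1 (structure of a message).} Instant decodability forces every other summand $x_{j'}$, $j'\neq j$, to be cached at $k_j$. Hence $n_{j'}\in\mathcal M^{(c)}_{k_j}$ whenever $j'\neq j$, while $n_j\notin\mathcal M^{(c)}_{k_j}$. So the destinations $k_j$ with $j\neq j'$ are distinct connected workers of cluster $c$, and they all cache file $n_{j'}$. It follows that $m-1\le r_{c,n_{j'}}$ for every $j'$. Consequently
\[
1\ \ge\ \sum_{j=1}^m \frac{1}{r_{c,n_j}+1},
\]
since each term is at most $1/m$. Pieces with different lengths are handled by weighting each summand by its bit length. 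The message length is at least the largest piece length, which in turn is at least the weighted average, so the same inequality holds with lengths as weights. The plan is to normalize to pieces of a fixed size (after symbol extension) or to charge bits directly.

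\textbf{Step 2 (summing).} Assign each delivered bit of an IV for file $n$ at a cluster-$c$ destination the charge $1/(r_{c,n}+1)$. Step 1 shows that each transmitted bit absorbs total charge at most one from cluster-$c$ destinations; destinations in other clusters are simply ignored. Every requested-but-missing IV of file $n$ in cluster $c$ must be delivered in full, which is $T$ bits. These IVs number $d_{c,n}$, so the total charge is $T\sum_n d_{c,n}/(r_{c,n}+1)$. Therefore the number of transmitted bits is at least this quantity. Normalizing by $NQT=NK_0T$ and maximizing over $c$ gives \eqref{eq:placement_lb}. The main obstacle I expect is making the bit-level charging rigorous when pieces are arbitrary subsets of IV bits and one message may serve several destinations with unequal piece lengths. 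I would formalize it by viewing each message bit position as an XOR of single bits, each from a distinct requested IV, and applying the counting of Step 1 bitwise.

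\textbf{Step 3 (no disconnections).} Every worker of an initial cluster $c$ is connected and holds exactly one function. Hence $d_{c,n}=K_c-r_{c,n}$ and $\sum_n r_{c,n}=K_cM_c=t_cN$. The map $x\mapsto (K_c-x)/(x+1)=(K_c+1)/(x+1)-1$ is convex in $x$. Jensen's inequality with mean $t_c$ then gives
\[
\frac1N\sum_n \frac{K_c-r_{c,n}}{r_{c,n}+1}\ge \frac{K_c-t_c}{t_c+1}.
\]
Restricting the maximum to $\mathcal C_{\rm ini}$ yields \eqref{eq:placement_free_lb}. This bound does not depend on $\pi$, so it also lower-bounds $\min_\pi L^*_{\mathfrak S}(\pi)$.
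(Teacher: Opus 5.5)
Your proposal is correct and follows essentially the same argument as the paper. Both fix a cluster, charge each delivered bit of file $n$ the weight $1/(r_{c,n}+1)$, use the cache-only cancellation condition to get $r_{c,n_b}\ge h-1$ so that each transmitted bit absorbs total charge at most one, and then apply Jensen to the convex map $x\mapsto (K_c+1)/(x+1)-1$. The only cosmetic difference is in unequal piece lengths: you bound the charge by the average (hence maximum) piece length, whereas the paper first splits every message into atomic XOR packets of equal-length pieces.
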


\begin{IEEEproof}
Regard each transmission as a concatenation of atomic XOR packets, each of
which is an XOR of equal-length IV pieces; splitting a longer message into such
packets does not change its total length. Fix one packet $\mathcal X$ of
$\ell\in\mathbb N^+$ bits and one cluster $c\in[C]$. Let $\mathcal J_c(\mathcal X)$ be the set of
local indices $i\in[K_c]$ for which the connected worker $k_i^{(c)}$ recovers a
requested IV piece from $\mathcal X$,
and write $h=|\mathcal J_c(\mathcal X)|$. For
$b\in\mathcal J_c(\mathcal X)$, let $n_b$ be the file associated with the piece
recovered by worker $k_b^{(c)}$. Since the placement is uncoded and this worker
does not cache the corresponding file, $n_b\in[N]\setminus\mathcal M_b^{(c)}$.

For any distinct $a,b\in\mathcal J_c(\mathcal X)$, worker $k_a^{(c)}$ cancels
the term recovered by worker $k_b^{(c)}$ using its cache alone. Hence
$n_b\in\mathcal M_a^{(c)}$. It follows that at least the other $h-1$
destinations of cluster $c$ cache file $n_b$, and
therefore
\begin{IEEEeqnarray*}{rCl}
r_{c,n_b}&\geq&h-1.
\end{IEEEeqnarray*}
Assign every requested bit associated with file $n$ in cluster $c$ the weight
$1/(r_{c,n}+1)$. When $h\geq1$, the total weighted number of requested bits
delivered to that cluster by $\mathcal X$ is at most
\begin{IEEEeqnarray*}{rCl}
\ell\sum_{b\in\mathcal J_c(\mathcal X)}\frac{1}{r_{c,n_b}+1}
&\leq&\ell\frac{h}{h}=\ell.
\end{IEEEeqnarray*}
The inequality is also trivial when $h=0$. Summing over all transmitted packets,
and observing that file $n$ contributes $d_{c,n}$ missing IVs of $T$ bits to
cluster $c$, gives
\begin{IEEEeqnarray*}{rCl}
T\sum_{n=1}^{N}\frac{d_{c,n}}{r_{c,n}+1}
&\leq& LNK_0T.
\end{IEEEeqnarray*}
This holds for every cluster $c$; dividing by $NK_0T$ and taking the maximum
over $c$ proves \eqref{eq:placement_lb}.

It remains to specialize the bound to the no-disconnection setting. For every
initial cluster $c\in\mathcal C_{\rm ini}$ and file $n\in[N]$, $d_{c,n}=K_c-r_{c,n}$ and
$N^{-1}\sum_{n=1}^{N}r_{c,n}=t_c$. The function
\begin{IEEEeqnarray*}{rCl}
f_c(x)&=&\frac{K_c-x}{x+1}=\frac{K_c+1}{x+1}-1
\end{IEEEeqnarray*}
is convex on $[0,K_c]$. Jensen's inequality therefore gives
\begin{IEEEeqnarray*}{rCl}
\frac{1}{N}\sum_{n=1}^{N}\frac{K_c-r_{c,n}}{r_{c,n}+1}
&\geq&\frac{K_c-t_c}{t_c+1}.
\end{IEEEeqnarray*}
Substitution into \eqref{eq:placement_lb} proves
\eqref{eq:placement_free_lb}. Since the right-hand side is independent of
$\pi$, the same lower bound holds after minimizing over
$\mathcal P_{\rm unc}$.
\end{IEEEproof}

\begin{remark}
The file-wise weights in \eqref{eq:placement_lb} are essential. Concentrating
the cache incidences of a cluster on a few highly replicated files increases
the number of missing requests carried by the remaining, weakly replicated
files. The convexity step accounts for this tradeoff and shows that no
unbalanced uncoded placement can reduce the converse below
\eqref{eq:placement_free_lb}. The argument permits each XOR to combine pieces
requested by workers from different clusters; it only accounts separately for
the information delivered to each cluster.
\end{remark}

We now combine the achievable load in Corollary~\ref{pro2} with
Lemma~\ref{lem:multicast} to establish a constant approximation guarantee
within the instantly decodable XOR class $\mathfrak S$. The benchmark may
optimize its uncoded placement.
\begin{theorem}[Constant-factor guarantee without disconnections]
\label{thm:gap}
Assume that no worker disconnects and $\tau>1$. Let $t_\star$ be
the largest replication level among initial clusters attaining
$\varrho_{\max}$:
\begin{IEEEeqnarray*}{rCl}
t_\star=\max\{t_c:c\in\mathcal C_{\rm ini},\ 
\varrho_c=\varrho_{\max}\}.
\end{IEEEeqnarray*}
For every feasible uncoded placement $\pi\in\mathcal P_{\rm unc}$,
\begin{IEEEeqnarray}{l}
\displaystyle\frac{L_{\rm dyn}}{L^*_{\mathfrak S}(\pi)}\leq\nonumber\\[2pt]
\quad\displaystyle\left(1+\frac{1}{t_\star}\right)
\begin{cases}
1,&C_{\rm new}\geq1,\\[5pt]
\displaystyle1+\frac{\varrho_{\min}}{\varrho_{\max}(t_0-1)},
&C_{\rm new}=0.
\end{cases}
\label{eq:uncoded_placement_gap}
\end{IEEEeqnarray}
Consequently, even after optimizing the benchmark's placement,
\begin{IEEEeqnarray}{rCl}
\frac{L_{\rm dyn}}
{\min_{\pi\in\mathcal P_{\rm unc}}L^*_{\mathfrak S}(\pi)}
&\leq&\begin{cases}2,&C_{\rm new}\geq1,\\[2pt]4,&C_{\rm new}=0.\end{cases}
\label{eq:unconditional_gap}
\end{IEEEeqnarray}
\end{theorem}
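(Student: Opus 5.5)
The plan is to divide the upper bounds of Corollary~\ref{pro2} by the placement-free converse \eqref{eq:placement_free_lb} of Lemma~\ref{lem:multicast}, and then simplify the ratio with the definitions of $\varrho_{\max}$, $\varrho_{\min}$ and $t_\star$.

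\textbf{Step 1: lower-bound the converse by a single cluster.} In \eqref{eq:placement_free_lb} I keep only one term of the maximum. Let $c_\star\in\mathcal C_{\rm ini}$ attain $\varrho_{c_\star}=\varrho_{\max}$ with $t_{c_\star}=t_\star$. Then
\[
\frac{K_{c_\star}-t_\star}{t_\star+1}=\varrho_{\max}\,\frac{t_\star}{t_\star+1},
\]
so for every $\pi\in\mathcal P_{\rm unc}$,
\[
L^*_{\mathfrak S}(\pi)\geq \frac{\varrho_{\max}}{K_0}\cdot\frac{t_\star}{t_\star+1}.
\]
This is why $t_\star$ is defined as the largest $t_c$ among the maximizers: it gives the best constant from this bound. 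Using Lemma~\ref{lem:multicast} requires no disconnections and the stated Reduce assignment, and both hold here.

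\textbf{Step 2: divide.} Assume first $C_{\rm new}\geq1$. Then \eqref{load_new} gives $L_{\rm dyn}=\varrho_{\max}/K_0$, and the ratio is at most $1+1/t_\star$.

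Now assume $C_{\rm new}=0$. Then $\tau=t_0>1$, and the upper bound in \eqref{load_no_new} gives
\[
L_{\rm dyn}\leq \frac{\varrho_{\max}}{K_0}\left(1+\frac{\varrho_{\min}}{\varrho_{\max}(t_0-1)}\right).
\]
Dividing by the Step~1 bound yields \eqref{eq:uncoded_placement_gap}. The right-hand side does not depend on $\pi$, so it also bounds the ratio against $\min_\pi L^*_{\mathfrak S}(\pi)$.

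\textbf{Step 3: make the constants unconditional.} Since $t_\star\geq1$, the factor satisfies $1+1/t_\star\leq2$, which settles the case $C_{\rm new}\geq1$. For $C_{\rm new}=0$, I use $\varrho_{\min}\leq\varrho_{\max}$ and $t_0\geq2$, so the second factor is at most $1+1/(t_0-1)\leq2$, and the product is at most $4$.

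\textbf{Main obstacle.} The only delicate point is whether the two factors can both reach $2$ at once; if they can, a finer constant would need a joint argument. That is not needed for the bound of $4$, since each factor is simply bounded by $2$. I will also check that $t_0\geq2$ indeed follows from $\tau>1$ when no cluster joins, because then $\tau=t_0$.
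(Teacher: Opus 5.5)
Your proposal is correct and follows the paper's proof essentially step for step. You specialize \eqref{eq:placement_free_lb} to an initial cluster attaining $\varrho_{\max}$ with replication $t_\star$, which gives $L^*_{\mathfrak S}(\pi)\geq\varrho_{\max}t_\star/(K_0(t_\star+1))$, divide the load of Corollary~\ref{pro2} by it (exact with arrivals, the upper bound in \eqref{load_no_new} with $\tau=t_0$ otherwise), bound each factor by two using $t_\star\geq1$, $\varrho_{\min}\leq\varrho_{\max}$ and $t_0\geq2$, and obtain \eqref{eq:unconditional_gap} from the uniformity of the converse in $\pi$, exactly as the paper does.
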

\begin{IEEEproof}
Choose an initial cluster attaining $\varrho_{\max}$ with replication
$t_\star$. Lemma~\ref{lem:multicast} gives, for every feasible $\pi$,
\begin{IEEEeqnarray}{rCl}
L^*_{\mathfrak S}(\pi)&\geq&
\frac{\varrho_{\max}t_\star}{K_0(t_\star+1)}.
\label{eq:uncoded_placement_opt_lb}
\end{IEEEeqnarray}
With arrivals, dividing $L_{\rm dyn}=\varrho_{\max}/K_0$ by this bound
gives $1+1/t_\star\leq2$. Without arrivals, $\tau=t_0$ and
\eqref{load_no_new} yields the second branch of
\eqref{eq:uncoded_placement_gap}. Since $t_\star\geq1$,
$\varrho_{\min}\leq\varrho_{\max}$, and $t_0\geq2$, its two factors
are each at most two. The converse is uniform over $\pi$, so the same
constants apply after minimizing over $\mathcal P_{\rm unc}$.
\end{IEEEproof}

The constants in Theorem~\ref{thm:gap} are independent of the cluster
sizes, cache-ratio imbalance, and replication added by arrivals. The
guarantee applies to the class $\mathfrak S$; it does not establish a
constant approximation ratio over arbitrary CDC shuffle schemes.
For the remaining case $\tau=1$, there is one initial cluster with
$t_1=1$ and no arrivals. Every feasible uncoded placement then stores
each file at exactly one worker, since its total cache capacity is $N$
files. Lemma~\ref{lem:lower} gives the load lower bound
$(K_0-1)/K_0$, which is attained by Corollary~\ref{pro2}. Thus the scheme
is optimal in this case, and the constants in
\eqref{eq:unconditional_gap} also hold when $\tau=1$.

The factor $2$ with arrivals is tight: for $(K_1,t_1)=(2,1)$ and
$(K_2,t_2)=(4,3)$, pairing the two initial workers' missing IVs at the
Cartesian-product placement achieves load $1/4$, while
\eqref{eq:placement_free_lb} lower-bounds the placement-optimized optimum
by $1/4$. Corollary~\ref{pro2} gives $L_{\rm dyn}=1/2$.

\section{Numerical Results}
\label{sec:numerical}
We compare the proposed scheme with a decentralized CDC baseline adapted from
\cite{6807823}, averaging each point over $30$ realizations. In every
realization, the departing workers are drawn uniformly at random from the
workers of the initial clusters, so that each departure abandons exactly one
Reduce function, whereas the baseline additionally re-randomizes its
decentralized cache placement. Departures of newly joined workers, which hold
no function before the reassignment, are covered by
Theorem~\ref{thm:load} but are not exercised in the figures. For each
realization, we evaluate Lemma~\ref{lem:lower} using the surviving copies of
each file and Lemma~\ref{lem:multicast} using its cluster-wise $r_{c,n}$ and
$d_{c,n}$ under the proposed placement and reassignment. The two bounds are
averaged separately and both are plotted: the curves labeled ``General CDC
lower bound'' and ``Multicast lower bound'' correspond to
Lemmas~\ref{lem:lower} and~\ref{lem:multicast}, respectively. Each is a
converse for the proposed placement and reassignment; the baseline uses a
separate random placement and Reduce-function reassignment.

\subsection{Decentralized CDC}
The baseline draws on the same $N$ files as the proposed scheme. Prior to any
worker departure, each worker in cluster $c\in[C]$ is initialized with residual
capacity $M_c$, and the files are traversed repeatedly. For each file, a worker
is chosen uniformly at random among the workers that still have residual capacity
and have not yet cached that file; when no such worker exists, the file is skipped
in the current traversal. Consequently, every worker caches exactly $M_c$
distinct files, although the number of replicas of each file is random and need
not be equal across files or clusters. Abandoned Reduce functions are assigned
uniformly at random to distinct connected workers. A realization in which some
file has no surviving copy counts as a baseline failure and is excluded from
the average load.

For each file $n\in[N]$, let $\mathcal H_n$ be the connected workers caching it:
\begin{IEEEeqnarray*}{rCl}
\mathcal H_n&=&\{k_i^{(c)}:c\in[C],\ i\in[K_c],\\
&&\quad k_i^{(c)}\in\bar{\mathcal D},\ n\in\mathcal M_i^{(c)}\}.
\end{IEEEeqnarray*}
Write $t_n=|\mathcal H_n|$ and let $t_{\max}$ be the largest $t_n$ below
$|\bar{\mathcal D}|$, or zero if every file is fully replicated. In the
latter case, $L_{\rm dec}=0$.

In this subsection, $\boldsymbol\gamma$ denotes the baseline's own
Reduce assignment, with $U=\max_{k\in[K]}\gamma_k$ and
$\gamma_k^u=\mathbf1\{u\le\gamma_k\}$ for $k\in[K]$ and $u\in[U]$.
For every episode
$u\in[U]$ and every $\mathcal S\subseteq\bar{\mathcal D}$ with
$2\leq|\mathcal S|\leq t_{\max}+1$, define the largest demand as
\begin{IEEEeqnarray*}{rCl}
V^{\mathcal S,u}_{\max}&=&\max_{k\in\mathcal S}\gamma_k^u
\bigl|\{n\in[N]:\mathcal H_n=\mathcal S\setminus\{k\}\}\bigr|.
\end{IEEEeqnarray*}
In episode $u$, an active worker
$k\in\mathcal S$ requests one IV for each file cached exactly by the other workers of
$\mathcal S$.
In each round, every worker of $\mathcal S$ splits each requested IV into
$|\mathcal S|-1$ subpackets labeled by the workers that must supply them and
transmits one message of length $T/(|\mathcal S|-1)$ XORing the round's
subpacket of one requested IV of every other worker, recovering its own
subpacket by canceling the terms whose files it caches. The number of rounds is
the largest demand in $\mathcal S$, every worker transmitting in every round and
padding the slots in which some other worker has no remaining request; a round
costs $|\mathcal S|T/(|\mathcal S|-1)$ bits. Multiplying by the number of
rounds and summing over subsets and episodes gives
\begin{IEEEeqnarray*}{rCl}
L_{\textnormal{dec}}&=&\frac{1}{NQ}\sum_{u=1}^{U}
\sum_{s=2}^{t_{\max}+1}
\sum_{\substack{\mathcal S\subseteq\bar{\mathcal D}\\|\mathcal S|=s}}
\frac{s}{s-1}V_{\max}^{\mathcal S,u}.
\end{IEEEeqnarray*}
Symbol extension can be used when needed for the subpacket splits.

\subsection{Computational Complexity}
Using the index tuples $\mathcal R$ and the group counts
$V_c=(K_c-t_c)\prod_{j\in[C]\setminus\{c\}}t_j$, $c\in[C]$, defined above, constructing the
communication PDAs takes
$O(U|\mathcal R|(\max_{c\in[C]}V_c)(C+\tau))$ time and
$O(\FPDA K)$ storage when episodes are processed sequentially. A direct
implementation of the greedy reassignment takes
$O(|\mathcal D_{\rm ini}|C^2U_{\max}|\mathcal R|)$ time, where $U_{\max}\leq Q$ is the largest
number of episodes encountered during the search. The decentralized
baseline enumerates
\begin{IEEEeqnarray*}{rCl}
\sum_{s=2}^{t_{\max}+1}s\binom{|\bar{\mathcal D}|}{s}
\end{IEEEeqnarray*}
subset-worker pairs. Indeed, each tuple in $\mathcal R$ admits at most
$\max_{c\in[C]}V_c$ rounds, and a round scans $C$ counters while relabeling at most
$\tau$ entries. The Cartesian-product array holds $\FPDA K$ entries, and
episodes can be processed sequentially. Each of the $|\mathcal D_{\rm ini}|C$ candidate
evaluations scans at most $U_{\max}|\mathcal R|$ pairs at $O(C)$ cost, while a
baseline subset of size $s$ inspects $s$ IV lists.
For fixed $C$ and $t_c$, $c\in[C]$, the Cartesian-product construction stays polynomial in
the cluster sizes, whereas the decentralized-baseline enumeration is
exponential when $t_{\max}$ grows linearly with
$|\bar{\mathcal D}|$. This comparison concerns construction
complexity and implies no ordering of communication load.

\subsection{Comparisons}
\label{sec:arrival}
Fig.~\ref{fig:entry} first illustrates the effects of worker departures
and of the size of the arriving cluster.

\begin{figure}[t]
\subfloat[]{\includegraphics[width=0.47\linewidth]{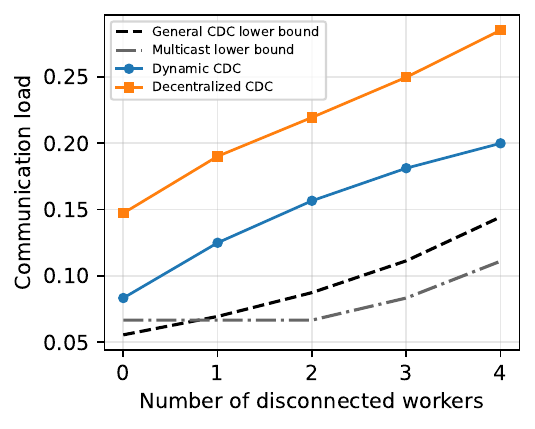}}
\hfill
\subfloat[]{\includegraphics[width=0.47\linewidth]{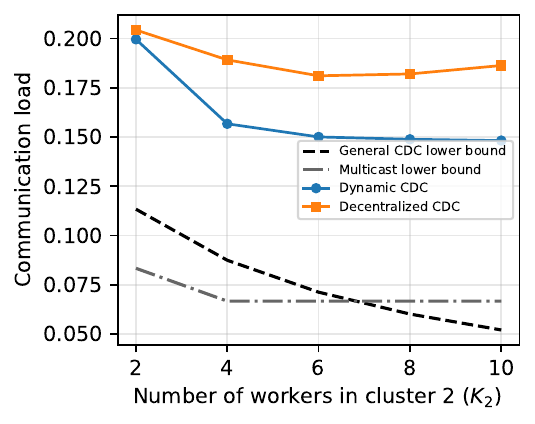}}
\caption{Worker-arrival experiments with
$C_{\textnormal{ini}}=C_{\textnormal{new}}=1$, $K_1=6$,
$M_1=2N/3$, $M_2=N/2$, and $Q=6$. (a) Communication load versus the
number of disconnected workers for $K_2=4$. (b) Communication
load versus $K_2\in\{2,4,6,8,10\}$ for two disconnected workers.}
\label{fig:entry}
\end{figure}

In Fig.~\ref{fig:entry}(a), the proposed scheme attains a lower average load
for every tested number of disconnected workers: the gap is
at least approximately $0.063$ and is largest at four disconnected workers.
The multicast bound is higher with no departures, whereas the general CDC
bound is higher at the other four points.
Fig.~\ref{fig:entry}(b) shows the
proposed load decreasing from $0.200$ to $0.148$ as $K_2$ grows from $2$ to
$10$, whereas the baseline stays between $0.181$ and $0.204$. The two schemes
nearly coincide at $K_2=2$, where the difference is only $0.005$, and for
$K_2\geq4$ the proposed scheme is lower by $0.031$ to $0.038$. These gaps
illustrate the growing benefit of structured cross-cluster multicast
opportunities as the arriving cluster becomes larger. In the same figure,
the file-wise multicast bound is the higher converse at $K_2=8$ and $10$,
where the general bound continues to decrease.

We next fix $K_1=6$, $K_2=5$, $M_1=N/2$, and three disconnected workers, while
varying the cache size of each arriving worker in Fig.~\ref{fig:arrival_cache}.

\begin{figure}[t]
\centering
\includegraphics[width=0.74\linewidth]{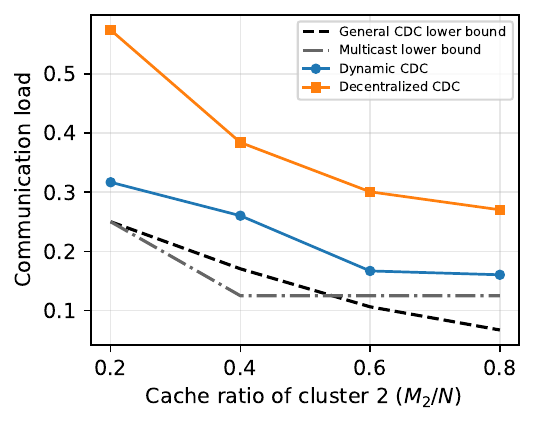}
\caption{Communication load versus the arriving-worker cache ratio
$M_2/N\in\{1/5,2/5,3/5,4/5\}$ for
$C_{\textnormal{ini}}=C_{\textnormal{new}}=1$, $K_1=6$, $K_2=5$,
$M_1=N/2$, three disconnected workers, and $Q=6$.}
\label{fig:arrival_cache}
\end{figure}

Both loads decrease as $M_2/N$ increases, with the proposed scheme attaining
the lower average load at all four cache ratios. At
$M_2/N=1/5$, six of the
$30$ decentralized realizations lose at least one file after the departures, so
the plotted baseline load at this point is averaged over the remaining successful
realizations. No baseline failures occur at the other tested points. The
file-wise multicast bound is higher at $M_2/N=3/5$ and $4/5$, where it is
$0.125$ at both points, compared with proposed loads $0.167$ and $0.160$,
respectively.

\section{Conclusion}
\label{sec:conclusion}
We proposed a dynamic CDC scheme for worker arrivals and departures that leaves
surviving caches unchanged: Cartesian-product placement puts arriving storage to
use, and delivery-aware reassignment preserves compatible multicast groups.
The scheme tolerates any $\tau-1$ worker disconnections, where
$\tau=\sum_{c\in[C]}t_c$, and this worst-case guarantee is tight.
The exact load was characterized for arbitrary feasible disconnection sets.
Without
disconnections, a file-wise multicast converse gives factors of two with arrivals
and four without relative to the best instantly decodable scheme over all feasible
uncoded placements, including unbalanced ones. Constant-factor guarantees
for arbitrary feasible disconnection patterns and for general CDC shuffle
schemes remain open.

\bibliographystyle{IEEEtran}
\bibliography{ref}

@ARTICLE{10086693,
  author={Yang, Ting and Wan, Kai and Cheng, Minquan and Qiu, Robert Caiming and Caire, Giuseppe},
  journal={IEEE Transactions on Information Theory}, 
  title={Multiple-antenna placement delivery array for cache-aided {MISO} systems}, 
  year={2023},
  volume={69},
  number={8},
  pages={4855-4868},
  doi={10.1109/TIT.2023.3262935}}

@article{wu2024coded,
  title={Coded computing for multi-cluster distributed computations},
  author={Wu, Youlong and Li, Chenglin and Hu, Haoyang and Song, Xiyu and Ma, Shuai and Shi, Yuanming},
  journal={IEEE Transactions on Communications},
  volume={73},
  number={2},
  pages={1114--1127},
  year={2025},
  doi={10.1109/TCOMM.2024.3446641},
  publisher={IEEE}
}

@INPROCEEDINGS{11195520,
  author={Zhong, Xi and Lu, Samuel and Kliewer, Jörg and Ji, Mingyue},
  booktitle={2025 IEEE International Symposium on Information Theory (ISIT)}, 
  title={Dual-{Lagrange} encoding for storage and download in elastic computing for resilience}, 
  year={2025},
  volume={},
  number={},
  pages={1-6},
  doi={10.1109/ISIT63088.2025.11195520}}

@ARTICLE{7935426,
  author={Li, Songze and Yu, Qian and Maddah-Ali, Mohammad Ali and Avestimehr, A. Salman},
  journal={IEEE/ACM Transactions on Networking}, 
  title={A Scalable Framework for Wireless Distributed Computing}, 
  year={2017},
  volume={25},
  number={5},
  pages={2643-2654},
  doi={10.1109/TNET.2017.2702605}}

@ARTICLE{6763007,
  author={Maddah-Ali, Mohammad Ali and Niesen, Urs},
  journal={IEEE Transactions on Information Theory}, 
  title={Fundamental Limits of Caching}, 
  year={2014},
  volume={60},
  number={5},
  pages={2856-2867},
  doi={10.1109/TIT.2014.2306938}}

@INPROCEEDINGS{8613522,
  author={Yan, Qifa and Tang, Xiaohu and Chen, Qingchun},
  booktitle={2018 IEEE Information Theory Workshop (ITW)}, 
  title={Placement Delivery Array and Its Applications}, 
  year={2018},
  volume={},
  number={},
  pages={1-5},
  doi={10.1109/ITW.2018.8613522}}

@INPROCEEDINGS{8437603,
  author={Yan, Qifa and Wigger, Michèle and Yang, Sheng},
  booktitle={2018 IEEE International Symposium on Information Theory (ISIT)}, 
  title={Placement Delivery Array Design for Combination Networks with Edge Caching}, 
  year={2018},
  volume={},
  number={},
  pages={1555-1559},
  doi={10.1109/ISIT.2018.8437603}}

@INPROCEEDINGS{9815628,
  author={Salehi, MohammadJavad and Parrinello, Emanuele and Mahmoodi, Hamidreza Bakhshzad and T\"{o}lli, Antti},
  booktitle={2022 Joint European Conference on Networks and Communications \& 6G Summit (EuCNC/6G Summit)}, 
  title={Low-Subpacketization Multi-Antenna Coded Caching for Dynamic Networks}, 
  year={2022},
  volume={},
  number={},
  pages={112-117},
  doi={10.1109/EuCNC/6GSummit54941.2022.9815628}
}

@ARTICLE{6807823,
  author={Maddah-Ali, Mohammad Ali and Niesen, Urs},
  journal={IEEE/ACM Transactions on Networking}, 
  title={Decentralized Coded Caching Attains Order-Optimal Memory-Rate Tradeoff}, 
  year={2015},
  volume={23},
  number={4},
  pages={1029-1040},
  doi={10.1109/TNET.2014.2317316}}

@ARTICLE{7973185,
  author={Yan, Qifa and Cheng, Minquan and Tang, Xiaohu and Chen, Qingchun},
  journal={IEEE Transactions on Information Theory}, 
  title={On the Placement Delivery Array Design for Centralized Coded Caching Scheme}, 
  year={2017},
  volume={63},
  number={9},
  pages={5821-5833},
  doi={10.1109/TIT.2017.2725272}}

@article{wang2023placement,
  title={Placement delivery array construction via {Cartesian} product for coded caching},
  author={Wang, Jinyu and Cheng, Minquan and Wan, Kai and Caire, Giuseppe},
  journal={IEEE Transactions on Information Theory},
  volume={69},
  number={12},
  pages={7602--7626},
  year={2023},
  publisher={IEEE}
}

@article{wang2022coded,
  title={Coded distributed computing with pre-set data placement and output functions assignment},
  author={Wang, Yuhan and Wu, Youlong},
  journal={IEEE Transactions on Information Theory},
  year={2025},
  note={doi: 10.1109/TIT.2025.3528083}
}

@article{dean2008mapreduce,
  title={{MapReduce}: Simplified data processing on large clusters},
  author={Dean, Jeffrey and Ghemawat, Sanjay},
  journal={Communications of the ACM},
  volume={51},
  number={1},
  pages={107--113},
  year={2008},
  publisher={ACM New York, NY, USA}
}

@inproceedings{zaharia2010spark,
  title={{Spark}: Cluster computing with working sets},
  author={Zaharia, Matei and Chowdhury, Mosharaf and Franklin, Michael J and Shenker, Scott and Stoica, Ion},
  booktitle={2nd USENIX workshop on hot topics in cloud computing (HotCloud 10)},
  year={2010}
}

@article{li2017fundamental,
  title={A fundamental tradeoff between computation and communication in distributed computing},
  author={Li, Songze and Maddah-Ali, Mohammad Ali and Yu, Qian and Avestimehr, A Salman},
  journal={IEEE Transactions on Information Theory},
  volume={64},
  number={1},
  pages={109--128},
  year={2018},
  publisher={IEEE}
}

@article{krishnan2021umbrella,
  title={An umbrella converse for data exchange: Applied to caching, computing, and shuffling},
  author={Krishnan, Prasad and Natarajan, Lakshmi and Lalitha, V.},
  journal={Entropy},
  volume={23},
  number={8},
  pages={985},
  year={2021},
  publisher={MDPI},
  doi={10.3390/e23080985}
}

@ARTICLE{9448271,
  author={Woolsey, Nicholas and Chen, Rong-Rong and Ji, Mingyue},
  journal={IEEE Transactions on Communications}, 
  title={A New Combinatorial Coded Design for Heterogeneous Distributed Computing}, 
  year={2021},
  volume={69},
  number={9},
  pages={5672-5685},
  doi={10.1109/TCOMM.2021.3087628}}

@article{reisizadeh2019coded,
  title={Coded computation over heterogeneous clusters},
  author={Reisizadeh, Amirhossein and Prakash, Saurav and Pedarsani, Ramtin and Avestimehr, Amir Salman},
  journal={IEEE Transactions on Information Theory},
  volume={65},
  number={7},
  pages={4227--4242},
  year={2019},
  publisher={IEEE}
}

@article{xu2021new,
  title={New Results on the Computation-Communication Tradeoff for Heterogeneous Coded Distributed Computing},
  author={Xu, Fan and Shao, Shuo and Tao, Meixia},
  journal={IEEE Transactions on Communications},
  volume={69},
  number={4},
  pages={2254--2270},
  year={2021},
  publisher={IEEE}
}

@article{chen2023optimality,
  title={On the Optimality of Data Exchange for Master-Aided Edge Computing Systems},
  author={Chen, Haoning and Long, Junfeng and Ma, Shuai and Tang, Mingjian and Wu, Youlong},
  journal={IEEE Transactions on Communications},
  volume={71},
  number={3},
  pages={1364--1376},
  year={2023},
  publisher={IEEE}
}

@inproceedings{chen2021coded,
  title={Coded computing for master-aided distributed computing systems},
  author={Chen, Haoning and Wu, Youlong},
  booktitle={2020 IEEE Information Theory Workshop (ITW)},
  pages={1--5},
  year={2021},
  organization={IEEE}
}

@INPROCEEDINGS{CEC, 
  author={Yang, Yaoqing and Interlandi, Matteo and Grover, Pulkit and Kar, Soummya and Amizadeh, Saeed and Weimer, Markus},
  booktitle={2019 IEEE Int. Symp. Inf. Theory (ISIT)}, 
  title={Coded elastic computing}, 
  year={2019},
  volume={},
  number={},
  pages={2654-2658},
  publisher={IEEE} 
}
\end{document}